\patchcmd{\section}{\scshape}{\bfseries\scshape}{}{}
\renewcommand{\@secnumfont}{\bfseries}
\DeclareMathOperator{\Id}{Id}
\definecolor{NavyBlue}{rgb}{0.1,0.1,0.6}
\def\bfone{{\boldsymbol 1}}
\newcommand{\diff}{\mathrm{d}}
\newcommand{\N}{\mathbb{N}}
\newcommand{\E}{\mathbb{E}}
\newcommand{\R}{\mathbb{R}}
\renewcommand{\P}{\mathds{P}}
\newcommand{\cP}{\mathcal{P}}
\newcommand{\cF}{\mathcal{F}}
\renewcommand{\leq}{\leqslant}
\renewcommand{\geq}{\geqslant}
\newtheorem{remark}{Remark}
\newtheorem{definition}{Definition}
\newtheorem{proposition}{Proposition}
\newtheorem{thm}{Theorem}
\g@addto@macro{\endabstract}{\@setabstract}
\newcommand{\authorfootnotes}{\renewcommand\thefootnote{\@fnsymbol\c@footnote}}%
\title[A Continuized View on Nesterov Acceleration]{A Continuized View on Nesterov Acceleration}
\begin{document}
\begin{center}
	\LARGE 
	A Continuized View on Nesterov Acceleration \par \bigskip
	
	\normalsize
	\authorfootnotes
	Raphaël Berthier\textsuperscript{1}, Francis Bach\textsuperscript{1},
	Nicolas Flammarion\textsuperscript{2}, Pierre Gaillard\textsuperscript{3} and
	Adrien Taylor\textsuperscript{1} \par \bigskip
	
	\textsuperscript{1}Inria - Département d’informatique de l’ENS \\
	PSL Research University, Paris, France
	\smallskip \par
	\textsuperscript{2}School of Computer and Communication Sciences \\
	Ecole Polytechnique F\'ed\'erale de Lausanne\smallskip \par
		\textsuperscript{3}Univ. Grenoble Alpes, Inria, CNRS, Grenoble INP, LJK, 38000 Grenoble, France \par \bigskip
\end{center}

\begin{abstract}
We introduce the ``continuized'' Nesterov acceleration, a close variant of Nesterov acceleration whose variables are indexed by a continuous time parameter. The two variables continuously mix following a linear ordinary differential equation and take gradient steps at random times. This continuized variant benefits from the best of the continuous and the discrete frameworks: as a continuous process, one can use differential calculus to analyze convergence and obtain analytical expressions for the parameters; but a discretization of the continuized process can be computed exactly with convergence rates similar to those of Nesterov original acceleration. We show that the discretization has the same structure as Nesterov acceleration, but with random parameters. 
\end{abstract}

\section{Introduction}

In the last decades, the emergence of numerous applications in statistics, machine learning and signal processing has led to a renewed interest in first-order optimization methods \citep{bottou2018optimization}. They enjoy a low computational complexity necessary to the analysis of large datasets. The performance of first-order methods was largely improved thanks to acceleration techniques \citep[see the review by][and the many references therein]{daspremont2021acceleration}, starting with the seminal work of \cite{nesterov1983method}.

Let $f:\R^d \to \R$ be a convex and differentiable function, minimized at $x_* \in \R^d$. We assume throughout the paper that $f$ is $L$-smooth, i.e.,
\begin{align*}
\forall x,y \in \R^d, \qquad f(y) \leq f(x) + \langle \nabla f(x), y-x \rangle + \frac{L}{2} \Vert y-x \Vert^2 \, .
\end{align*}
In addition, we sometimes assume that $f$ is $\mu$-strongly convex for some $\mu > 0$, i.e., 
\begin{align*}
\forall x,y \in \R^d, \qquad f(y) \geq f(x) + \langle \nabla f(x), y-x \rangle + \frac{\mu}{2} \Vert y-x \Vert^2 \, .
\end{align*}
For the problem of minimizing $f$, gradient descent is well-known to achieve a rate $f(x_k)-f(x_*) = O(k^{-1})$ in the smooth case, and a rate $f(x_k) - f(x_*) = O((1-\mu/L)^k)$ in the smooth and strongly convex case. In both cases, Nesterov introduced an alternative method with essentially the same running-time complexity, that achieves faster rates; it converges at the rate $O(k^{-2})$ in the smooth convex case and at the rate $O((1-\sqrt{\mu/L})^k)$ in the smooth and strongly convex case~\citep{nesterov2003introductory}. These rates are then optimal among all methods that access gradients and linearly combine them~\citep{nesterov2003introductory,nemirovskij1983problem}.

Nesterov acceleration introduces several sequences of iterates---two or three, depending on the formulations---and relies on a clever blend of gradient steps and mixing steps between the iterates. Many works contributed to interpret and motivate the precise structure of the iteration that lead to the success of the method, see for instance \citep{bubeck2015geometric,flammarion2015averaging, arjevani2015lower,kim2016optimized,allen2014linear}.
A large number of these works found useful to study continuous time equivalents of Nesterov acceleration, obtained by taking the limit when stepsizes vanish, or from a variational framework. The continuous time index~$t$ of the limit allowed to use differential calculus to study the convergence of these equivalents. For examples of studies that use continuous time, see \citep{su2014differential,krichene2015accelerated,wilson2016lyapunov,wibisono2016variational,betancourt2018symplectic,diakonikolas2019approximate,shi2018understanding,shi2019acceleration,attouch2018fast,attouch2019rate,zhang2018direct,siegel2019accelerated,muehlebach2019dynamical,sanz2020connections}.

In this paper, we propose another way to obtain a continuous time equivalent of Nesterov acceleration, that we call the \emph{continuized} version of Nesterov acceleration, that does not have vanishing stepsizes. It is built by considering two variables $x_t, z_t \in \R^d$, $t \in \R_{\geq 0}$, that continuously mix following a linear ordinary differential equation (ODE), and that take gradient steps at random times $T_1, T_2, T_3, \dots$. Thus, in this modeling, mixing and gradient steps alternate randomly. 

Thanks to the continuous index $t$ and some stochastic calculus, one can differentiate averaged quantities (expectations) with respect to~$t$. In particular, this leads to simple analytical expressions for the optimal parameters as functions of $t$, while the optimal parameters of Nesterov accelerations are defined by recurrence relations that are complicated to solve. 

The discretization $\tilde{x}_k = x_{T_k}, \tilde{z}_k = z_{T_k}, k\in \N$, of the continuized process can be computed directly and exactly: the result is a recursion of the same form as Nesterov iteration, but with randomized parameters, that performs similarly to Nesterov original deterministic version both in theory and in simulations. 

There are particular situations where Nesterov acceleration can not be implemented and the continuized acceleration can. First, a major advantage of the continuized acceleration over Nesterov acceleration is that the parameters of the algorithm depend only on time $t \in \R_{\geq 0}$, and not on the number of past gradient steps $k$. This is useful in distributed implementations, where the total number of gradient steps taken in the network may not be known to a particular node. Second, the continuized modeling can be relevant when gradient steps arrive at random times, as in asynchronous parallel computing for instance. Gossip algorithms represent another example where both features are present: the total number of past communications in the network at a given time is unknown to all nodes, and communication between nodes occur at random times. This motivated \cite{even2020asynchrony} to consider a similar continuized procedure, for communication steps instead of gradient steps, in order to accelerate gossip algorithms; their work is the source of inspiration for the present paper. 

Beyond these particular situations, the continuized acceleration should be seen as a close approximation to Nesterov acceleration, that features both an insightful and convenient expression as a continuous time process and a direct implementation as a discrete iteration. We thus hope to contribute to the understanding of Nesterov acceleration. We believe that the continuized framework can be adapted to various settings and extensions of Nesterov acceleration; as an illustration of this statement, we study how the continuized acceleration behaves in the presence of additive noise on the gradients. 

\medskip
\noindent
\textbf{Notations.}
The index $k$ always denotes a non-negative integer, while the indices $t,s$ always denote non-negative reals. 
\medskip

\noindent
\textbf{Structure of the paper.}
In Section \ref{sec:reminders}, we recall gradient descent and Nesterov acceleration, its choice of parameters, and its convergence rate as a function of the number of iterations $k$. In Section~\ref{sec:continuized}, we introduce our continuized variant of Nesterov acceleration, its choice of parameters and its convergence rate as functions of $t$. In Section \ref{sec:implementation}, we show that the discretization of the continuized acceleration leads to an iteration of the same structure as Nesterov acceleration, with random parameters. We give the expressions for the parameters and the convergence rate in terms of the number of iterations $k$. Finally, in Section~\ref{sec:robustness}, we study the robustness of the continuized acceleration to additive noise.

\section{Reminders on gradient descent and Nesterov acceleration}
\label{sec:reminders}

For the sake of comparison, let us first recall classical results of convex optimization. Consider the iterates of gradient descent with stepsize $\gamma$,
\begin{equation*}
x_{k+1} = x_k - \gamma \nabla f(x_k) \, .
\end{equation*}
We have the following convergence of the function values $f(x_k)$, depending on whether the function~$f$ is \eqref{it:gd-cvx} convex, or \eqref{it:gd-str-cvx} strongly convex. 
\begin{thm}[Convergence of gradient descent]
	Choose the stepsize $\gamma = 1/L$. 
	\begin{enumerate}
		\item\label{it:gd-cvx}Then \begin{align*}
		f(x_k)-f(x_*) \leq \frac{2L \Vert x_0 - x_* \Vert^2}{k+4} \, . 
		\end{align*}
		\item\label{it:gd-str-cvx} Assume further that $f$ is $\mu$-strongly convex, $\mu > 0$. Then 
		\begin{align*}
		f(x_k)-f(x_*) \leq \frac{L}{2} \left(1-\frac{\mu}{L}\right)^k \Vert x_0 - x_* \Vert^2 \, . 
		\end{align*}
	\end{enumerate}
\end{thm}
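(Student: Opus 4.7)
The proof I would use is the classical textbook argument, split into the two parts of the statement.

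\textbf{Common ingredient (descent lemma).} The plan is to first exploit $L$-smoothness on the pair $(x_k,x_{k+1})$. Plugging $y=x_{k+1}=x_k-\gamma\nabla f(x_k)$ with $\gamma = 1/L$ into the smoothness inequality, the two gradient terms collapse into a single negative contribution and I obtain
\begin{equation*}
f(x_{k+1}) \leq f(x_k) - \frac{1}{2L} \Vert \nabla f(x_k) \Vert^2 \, .
\end{equation*}
This is the single ``one-step'' estimate driving both parts.

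\textbf{Part (1), convex case.} Set $\delta_k = f(x_k)-f(x_*)$. The first step is to control $\|\nabla f(x_k)\|$ from below by $\delta_k$. Convexity gives $\delta_k \leq \langle \nabla f(x_k), x_k-x_*\rangle \leq \Vert \nabla f(x_k)\Vert \, \Vert x_k-x_*\Vert$. I then observe that $\Vert x_k - x_*\Vert$ is monotonically non-increasing (expand the squared norm of $x_{k+1}-x_*$ and use $\langle \nabla f(x_k),x_k-x_*\rangle \geq \frac{1}{L}\Vert \nabla f(x_k)\Vert^2$ from co-coercivity, which is itself a consequence of $L$-smoothness). Hence $\Vert x_k-x_*\Vert \leq \Vert x_0-x_*\Vert$, and the descent lemma above becomes
\begin{equation*}
\delta_{k+1} \leq \delta_k - \frac{\delta_k^2}{2L\Vert x_0-x_*\Vert^2} \, .
\end{equation*}
Dividing by $\delta_k\delta_{k+1}$ and using $\delta_{k+1}\leq \delta_k$ gives the telescoping inequality $\frac{1}{\delta_{k+1}} \geq \frac{1}{\delta_k} + \frac{1}{2L\Vert x_0-x_*\Vert^2}$. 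Iterating from the initial bound $\delta_0 \leq \frac{L}{2}\Vert x_0-x_*\Vert^2$ (once again from smoothness applied at $x_*$, where $\nabla f(x_*)=0$), the initial term contributes $\frac{2}{L\Vert x_0-x_*\Vert^2}$, so after $k$ steps $\frac{1}{\delta_k} \geq \frac{k+4}{2L\Vert x_0-x_*\Vert^2}$, which is the stated bound.

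\textbf{Part (2), strongly convex case.} Here I work directly on the squared distance to $x_*$. Expanding,
\begin{equation*}
\Vert x_{k+1}-x_*\Vert^2 = \Vert x_k-x_*\Vert^2 - 2\gamma\langle\nabla f(x_k),x_k-x_*\rangle + \gamma^2\Vert\nabla f(x_k)\Vert^2 \, .
\end{equation*}
Strong convexity applied at $(x_k,x_*)$ yields $\langle\nabla f(x_k),x_k-x_*\rangle \geq f(x_k)-f(x_*) + \frac{\mu}{2}\Vert x_k-x_*\Vert^2$, and smoothness yields $\Vert\nabla f(x_k)\Vert^2 \leq 2L(f(x_k)-f(x_*))$. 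With $\gamma = 1/L$, the $f(x_k)-f(x_*)$ contributions cancel exactly, leaving $\Vert x_{k+1}-x_*\Vert^2 \leq (1-\mu/L)\Vert x_k-x_*\Vert^2$. A direct induction gives the contraction on distances, and a final application of smoothness converts it back into a function-value bound with the prefactor $L/2$.

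\textbf{Anticipated difficulty.} No step is genuinely hard; the only place where one has to be slightly careful is arranging the cancellation in part (2) so that exactly the stepsize $\gamma=1/L$ eliminates both the function-gap and the gradient-norm terms. The convex case is a bit longer because of the sharp constant ``$k+4$'', which requires combining the $1/\delta$ telescoping with the specific initial bound $\delta_0 \leq \frac{L}{2}\Vert x_0-x_*\Vert^2$ coming from smoothness.
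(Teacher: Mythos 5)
Your proof is correct. The paper does not actually prove this theorem: it only cites \citet[Corollary 2.1.2 and Theorem 2.1.15]{nesterov2003introductory}, so there is no ``paper approach'' to compare against. Your argument is exactly the standard textbook proof underlying those cited results: descent lemma plus co-coercivity to get monotonicity of $\Vert x_k-x_*\Vert$ and a $1/\delta_k$ telescoping with the initial estimate $\delta_0\leq\frac{L}{2}\Vert x_0-x_*\Vert^2$ for part~(1), and the one-step contraction $\Vert x_{k+1}-x_*\Vert^2\leq(1-\mu/L)\Vert x_k-x_*\Vert^2$ followed by one application of smoothness at $x_*$ for part~(2). All the intermediate inequalities you invoke (convexity lower bound, $\Vert\nabla f(x_k)\Vert^2\leq 2L\,\delta_k$, co-coercivity) are correct and the constants work out to give precisely the stated $\frac{2L\Vert x_0-x_*\Vert^2}{k+4}$ and $\frac{L}{2}(1-\mu/L)^k\Vert x_0-x_*\Vert^2$. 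One minor hygiene point: when dividing by $\delta_k\delta_{k+1}$ in part~(1), you implicitly assume $\delta_{k+1}>0$; this is harmless since if $\delta_{k+1}=0$ the claimed bound holds trivially from that index onward, but it is worth a one-line remark.
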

These results (or similar bounds) can be found at many places in the literature; for instance the first bound is in \citep[Corollary 2.1.2]{nesterov2003introductory} and the second bound is a simple consequence of \citep[Theorem 2.1.15]{nesterov2003introductory}. See also the  recent book of~\citet{nesterov2018lectures}.

To accelerate these rates of convergence, Nesterov introduced iterations of three sequences, parametrized by $\tau_k, \tau_k',\gamma_k, \gamma_k',  k\geq 0$, of the form
\begin{align}
&y_k = x_k + \tau_k(z_k-x_k) \, ,  \label{eq:nest-1}\\
&x_{k+1} = y_k - \gamma_k \nabla f(y_k) \, , \\
&z_{k+1} = z_k + \tau_k'(y_k-z_k) - \gamma_k' \nabla f(y_k ) \label{eq:nest-3}\, . 
\end{align}
Depending on whether the function $f$ is known to be (\ref{it:nest-cvx}) simply convex, or (\ref{it:nest-str-cvx}) strongly convex with a known strong convexity parameter, Nesterov gave choices of parameters leading to accelerated convergence rates.

\begin{thm}[Convergence of accelerated gradient descent]
	\label{thm:nesterov}
	\begin{enumerate}
		\item\label{it:nest-cvx} Choose the parameters $\tau_k = 1 - \frac{A_k}{A_{k+1}}, \tau_k' = 0, \gamma_k = \frac{1}{L}, \gamma_k' = \frac{A_{k+1}-A_k}{L}, k\geq 0$, where the sequence $A_k, k\geq 0$, is defined by the recurrence relation
		\begin{align*}
		&A_0 = 0 \, , &&A_{k+1} = A_k + \frac{1}{2}(1+\sqrt{4 A_k + 1}) \, .
		\end{align*}
		Then
		\begin{align*}
		f(x_k) - f(x_*) \leq \frac{2 L \Vert x_0 - x_* \Vert^2}{k^2} \, .
		\end{align*}
		\item\label{it:nest-str-cvx} Assume further that $f$ is $\mu$-strongly convex, $\mu > 0$. Choose the constant parameters \\$\tau_k \equiv \frac{\sqrt{\mu/L}}{1+\sqrt{\mu/L}}$, $\tau_k'\equiv \sqrt{\frac{\mu}{L}}$, $\gamma_k \equiv \frac{1}{L}$, $\gamma_k'\equiv \frac{1}{\sqrt{\mu L}}$, $k\geq 0$. Then	  
		\begin{align*}
		f(x_k) - f(x_*) \leq \left(f(x_0) - f(x_*) + \frac{\mu}{2} \Vert z_0 - x_* \Vert^2 \right) \left(1 - \sqrt{\frac{\mu}{L}}\right)^k \, .
		\end{align*}
	\end{enumerate}
\end{thm}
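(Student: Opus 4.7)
The natural route is a Lyapunov (potential function) argument, standard in the estimate-sequence/accelerated flow literature.

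For part~\ref{it:nest-cvx}, I would introduce the potential
\begin{equation*}
V_k = A_k \bigl(f(x_k)-f(x_*)\bigr) + \tfrac{L}{2}\Vert z_k-x_*\Vert^2,
\end{equation*}
and try to prove that $V_{k+1}\leq V_k$. The $L$-smoothness bound $f(x_{k+1})\leq f(y_k)-\tfrac{1}{2L}\Vert \nabla f(y_k)\Vert^2$ combined with two convexity inequalities at $x_k$ and $x_*$, weighted respectively by $A_k$ and $A_{k+1}-A_k$, yields an upper bound on $A_{k+1} f(x_{k+1}) - A_k f(x_k) - (A_{k+1}-A_k) f(x_*)$. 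The choice $\tau_k = 1 - A_k/A_{k+1}$ makes $A_{k+1}y_k = A_k x_k + (A_{k+1}-A_k)z_k$, which lets me rewrite the leftover inner product with $\nabla f(y_k)$ solely in terms of $z_k - x_*$. Expanding $\tfrac{L}{2}\Vert z_{k+1}-x_*\Vert^2$ using \eqref{eq:nest-3} with $\tau_k'=0$ and $\gamma_k'=(A_{k+1}-A_k)/L$ produces a cross-term that exactly cancels this inner product, leaving a residual proportional to $\bigl((A_{k+1}-A_k)^2-A_{k+1}\bigr)\Vert\nabla f(y_k)\Vert^2$. The recursion for $A_k$ in the statement is precisely the positive root of $(A_{k+1}-A_k)^2 = A_{k+1}$, so the residual vanishes and $V_{k+1}\leq V_k$. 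Telescoping from $V_0 = \tfrac{L}{2}\Vert x_0-x_*\Vert^2$ (using $z_0=x_0$), a short induction shows $A_k \geq k^2/4$, which converts the potential bound into the claimed $O(k^{-2})$ rate.

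For part~\ref{it:nest-str-cvx} I would use the contracting potential
\begin{equation*}
V_k = f(x_k)-f(x_*) + \tfrac{\mu}{2}\Vert z_k - x_*\Vert^2,
\end{equation*}
and prove the geometric decay $V_{k+1}\leq (1-\sqrt{\mu/L})\,V_k$. Writing $\rho=\sqrt{\mu/L}$, the smoothness inequality at $x_{k+1}$ and the two $\mu$-strong convexity inequalities at $x_k$ and $x_*$ combined with weights $1-\rho$ and $\rho$ give an upper bound on $f(x_{k+1}) - (1-\rho)f(x_k) - \rho f(x_*)$, in which the choice $\tau_k = \rho/(1+\rho)$ is exactly what makes the combination $(1-\rho)(x_k-y_k)+\rho(x_*-y_k)$ collinear to $z_k - x_*$. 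The expansion of $\tfrac{\mu}{2}\Vert z_{k+1}-x_*\Vert^2$ with $\tau_k'=\rho$ and $\gamma_k'=1/\sqrt{\mu L}$ then cancels the resulting inner product; the extra strong-convexity term $\tfrac{\mu\rho}{2}\Vert x_*-y_k\Vert^2$ is what absorbs the quadratic remainder and delivers the $(1-\rho)$ contraction. Iterating gives the bound.

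The routine part is writing down the Lyapunov functions and expanding; the real obstacle is algebraic bookkeeping. Each of the four parameters $\tau_k,\tau_k',\gamma_k,\gamma_k'$ is uniquely pinned down by demanding a specific cancellation (collinearity of the error vectors, vanishing of the cross term, absorption of the squared gradient by the smoothness step), and a single off-by-factor breaks the whole chain. Being disciplined about which inequality contributes which term, and identifying the strong-convexity quadratic as the critical ingredient in part~\ref{it:nest-str-cvx}, is where the bulk of the care is required.
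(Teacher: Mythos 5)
Your proposal is correct, and it is essentially the proof the paper relies on: the paper does not prove Theorem~\ref{thm:nesterov} itself but cites d'Aspremont et al.\ (Sections~4.4.1 and~4.5.3), which is precisely this potential-function argument with $V_k = A_k(f(x_k)-f(x_*)) + \tfrac{L}{2}\Vert z_k-x_*\Vert^2$ in the convex case and $V_k = f(x_k)-f(x_*) + \tfrac{\mu}{2}\Vert z_k-x_*\Vert^2$ in the strongly convex case. Your Lyapunov choices, the role of the recursion as $(A_{k+1}-A_k)^2 = A_{k+1}$, the induction $A_k\geq k^2/4$, and the cancellation mechanism all check out, and they are the exact discrete counterparts of the supermartingale computation the paper carries out in Appendix~\ref{ap:proof-continuized} for the continuized version.
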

This result, in this exact form, is proven by \citet[Sections 4.4.1 and 4.5.3]{daspremont2021acceleration}.

From a high-level perspective, Nesterov acceleration iterates over several variables, alternating between gradient steps (always with respect to the gradient at $y_k$) and mixing steps, where the running value of a variable is replaced by a linear combination of the other variables. However, the precise way gradient and mixing steps are coupled is rather mysterious, and the success of the proof of Theorem \ref{thm:nesterov} relies heavily on the detailed structure of the iterations. In the next section, we try to gain perspective on this structure by developing a continuized version of the acceleration. 

\section{Continuized version of Nesterov acceleration} 
\label{sec:continuized}

In this section and the following ones, we use several mathematical notions related to random processes. It should be possible to understand the paper with only a heuristic understanding of these notions. The rigorous definitions are provided in Appendix~\ref{ap:toolbox}. 
\medskip

We argue that the accelerated iteration becomes more natural if we consider two variables $x_t, z_t$ indexed by a continuous time $t \geq 0$, that are continuously mixing and that take gradient steps at random times. More precisely, let $T_1, T_2, T_3, \dots \geq 0$ be random times such that $T_1, T_2-T_1, T_3-T_2, \dots$ are independent identically distributed (i.i.d.), of law exponential with rate $1$ (any constant rate would do, but we choose $1$ to make the comparison with discrete time $k$ straightforward). By convention, we choose that our stochastic processes $t \mapsto x_t, t \mapsto z_t$ are c\`adl\`ag almost surely, i.e., right continuous with well-defined left-limits $x_{t-}, z_{t-}$ (see Definition \ref{def:cadlag} in Appendix~\ref{ap:toolbox}). Our dynamics are parametrized by functions $\gamma_t, \gamma'_t, \tau_t, \tau_t'$, $t \geq 0$. At the random times $T_1, T_2, \dots$, our sequences take gradient steps 
\begin{align}
x_{T_k} = x_{T_k-} - \gamma_{T_k} \nabla f (x_{T_k-}) \, , \label{eq:gradient-1}\\
z_{T_k} = z_{T_k-} - \gamma_{T_k}' \nabla f (x_{T_k-}) \, . \label{eq:gradient-2}
\end{align}
Because of the memoryless property of the exponential distribution, in a infinitesimal time interval $[t, t+\diff t]$, the variables take gradients steps with probability $\diff t$, independently of the past. 

Between these random times, the variables mix through a linear ordinary differential equation (ODE)
\begin{align}
&\diff x_t = \eta_t (z_t - x_t) \diff t \, ,  \label{eq:continuous-part-1}\\
&\diff z_t = \eta_t'(x_t - z_t) \diff t \label{eq:continuous-part-2}\, .
\end{align}
Following the notation of stochastic calculus, we can write the process more compactly in terms of the Poisson point measure $\diff N(t) = \sum_{k\geq 0} \delta_{T_k}(\diff t)$, which has intensity the Lebesgue measure $\diff t$,
\begin{align}
\diff x_t = \eta_t (z_t - x_t) \diff t - \gamma_t \nabla f(x_t) \diff N(t) \, , \label{eq:continuized-1}\\
\diff z_t = \eta_t' (x_t - z_t) \diff t- \gamma_t' \nabla f(x_t) \diff N(t) \, .\label{eq:continuized-2}
\end{align}

Before giving convergence guarantees for such processes, let us digress quickly on why we can expect an iteration of this form to be mathematically appealing.

First, from a Markov chain indexed by a discrete time index $k$, one can associate the so-called \emph{continuized} Markov chain, indexed by a continuous time $t$, that makes transition with the same Markov kernel, but at random times, with independent exponential time intervals \citep{aldous1995reversible}. Following this terminology, we refer to our acceleration \eqref{eq:continuized-1}-\eqref{eq:continuized-2} as the continuized acceleration. The continuized Markov chain is appreciated for its continuous time parameter $t$, while keeping many properties of the original Markov chain; similarly the continuized acceleration is arguably simpler to analyze, while performing similarly to Nesterov acceleration. 

Second, it is also interesting to compare with coordinate gradient descent methods, that are easier to analyze when coordinates are selected randomly rather than in an ordered way~\citep{wright2015coordinate}. Similarly, the continuized acceleration is simpler to analyze because the gradient steps \eqref{eq:gradient-1}-\eqref{eq:gradient-2} and the mixing steps \eqref{eq:continuous-part-1}-\eqref{eq:continuous-part-2} alternate randomly, due to the randomness of $T_1, T_2, \dots$
\medskip

In analogy with Theorem \ref{thm:nesterov}, we give choices of parameters that lead to accelerated convergence rates, in the convex case \eqref{it:cvx} and in the strongly convex case \eqref{it:str-cvx}. Convergence is analyzed as a function of $t$. As $\diff N(t)$ is a Poisson point process with rate $1$, $t$ is the expected number of gradient steps done by the algorithm. Thus $t$ is analoguous to $k$ in Theorem \ref{thm:nesterov}. 
\begin{thm}[Convergence of continuized Nesterov acceleration]
	\label{thm:continuized}
	\begin{enumerate}
		\item\label{it:cvx} Choose the parameters $\eta_t = \frac{2}{t}, \eta_t' = 0, \gamma_t = \frac{1}{L}, \gamma_t' = \frac{t}{2L}$. Then 
		\begin{align*}
		\E f(x_t) - f(x_*) \leq \frac{2L\Vert z_0 -x_* \Vert^2}{t^2} \, .
		\end{align*}
		\item\label{it:str-cvx} Assume further that $f$ is $\mu$-strongly convex, $\mu > 0$. Choose the constant parameters\\ $\eta_t = \eta_t' \equiv \sqrt{\frac{\mu}{L}}$, $\gamma_t \equiv \frac{1}{L}$, $\gamma_t' \equiv \frac{1}{\sqrt{\mu L }}$. Then 
		\begin{align*}
		\E f(x_t) - f(x_*) \leq \left(f(x_0) - f(x_*) + \frac{\mu}{2} \Vert z_0 - x_* \Vert^2\right) \exp \left(-\sqrt{\frac{\mu}{L}}t\right) \, .
		\end{align*}
	\end{enumerate}
\end{thm}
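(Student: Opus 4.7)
My plan is to use a Lyapunov-function method tailored to jump processes. For any smooth $\Phi(t,x,z)$, the infinitesimal generator of the process $(x_t,z_t)$ defined by \eqref{eq:continuized-1}--\eqref{eq:continuized-2} is
\begin{equation*}
\mathcal{A}\Phi(t,x,z) = \partial_t \Phi + \eta_t(z-x)\cdot\nabla_x\Phi + \eta_t'(x-z)\cdot\nabla_z\Phi + \bigl[\Phi(t, x-\gamma_t\nabla f(x), z-\gamma_t'\nabla f(x)) - \Phi(t,x,z)\bigr],
\end{equation*}
where the last bracket is the jump contribution (integrated against the rate-$1$ Poisson measure). By Dynkin's formula, $\tfrac{d}{dt}\E[\Phi(t,x_t,z_t)] = \E[\mathcal{A}\Phi(t,x_t,z_t)]$, so it suffices to choose $\Phi$ with $\mathcal{A}\Phi \leq 0$ pointwise, and then integrate (or apply Grönwall for time-varying bounds).

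For the strongly convex case \eqref{it:str-cvx}, I would take the Lyapunov candidate
\begin{equation*}
\Phi(t,x,z) = e^{\sqrt{\mu/L}\, t}\Bigl[f(x)-f(x_*) + \tfrac{\mu}{2}\Vert z-x_*\Vert^2\Bigr].
\end{equation*}
Computing $\mathcal{A}\Phi$, the jump in $f$ is bounded by the descent lemma: $f(x-\tfrac{1}{L}\nabla f(x)) - f(x) \leq -\tfrac{1}{2L}\Vert \nabla f(x)\Vert^2$. The quadratic jump in $z$ expands as $-2\gamma_t'\nabla f(x)\cdot(z-x_*) + (\gamma_t')^2\Vert\nabla f(x)\Vert^2$; the choice $\gamma_t' = 1/\sqrt{\mu L}$ makes the $\Vert\nabla f(x)\Vert^2$ contributions cancel exactly. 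The remaining linear-in-$\nabla f(x)$ terms combine (using $\eta_t=\sqrt{\mu/L}$) into $-\sqrt{\mu/L}\,\nabla f(x)\cdot(x-x_*)$, which I then bound below by the strong convexity inequality $\nabla f(x)\cdot(x-x_*) \geq f(x)-f(x_*) + \tfrac{\mu}{2}\Vert x-x_*\Vert^2$. After collecting terms, the $f$-contributions cancel with the $\partial_t$ part, and the remaining quadratic terms in $x,z,x_*$ should close up as $-\tfrac{\mu}{2}\sqrt{\mu/L}\,\Vert x-z\Vert^2 \leq 0$, giving $\mathcal{A}\Phi\leq 0$.

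For the convex case \eqref{it:cvx}, the analogous Lyapunov candidate is time-weighted,
\begin{equation*}
\Phi(t,x,z) = \tfrac{t^2}{4L}\bigl[f(x)-f(x_*)\bigr] + \tfrac{1}{2}\Vert z-x_*\Vert^2,
\end{equation*}
and I would verify $\mathcal{A}\Phi\leq 0$ in the same way: the descent lemma controls the $f$-jump, the choice $\gamma_t' = t/(2L)$ together with $\gamma_t=1/L$ balances the quadratic gradient term arising from the $z$-jump against the $f$-decrease, and the choice $\eta_t=2/t$ makes the drift cancel the $\partial_t$ derivative of $t^2/(4L)$ up to a term that is absorbed using plain convexity $f(x_*) \geq f(x) + \nabla f(x)\cdot(x_*-x)$. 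Integrating $\tfrac{d}{dt}\E[\Phi]\leq 0$ from $0$ to $t$ and dropping the non-negative $z$-term gives the claimed $O(1/t^2)$ bound.

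The main obstacle I anticipate is justifying Dynkin's formula rigorously for this jump-diffusion (the formal argument I sketched is really a generator computation and requires, e.g., that $\Phi(t,x_t,z_t)$ is integrable and that the compensated jump martingale is a true martingale); this is where the appendix on stochastic calculus alluded to in the text presumably comes in. The algebraic heart of the proof is the cancellation between the smoothness-induced $(\gamma_t')^2\Vert\nabla f(x)\Vert^2$ jump term and the negative $f$-drop, which is precisely what dictates the choices of $\gamma_t,\gamma_t',\eta_t,\eta_t'$; matching this reveals why these specific continuous-time parameters are the correct analogues of Nesterov's tuning.
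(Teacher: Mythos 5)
Your proposal is correct and follows essentially the same approach as the paper: the generator/Dynkin formulation is the same as the paper's supermartingale argument via the jump-It\^o formula (Proposition~\ref{prop:ito}), and your Lyapunov candidates coincide, up to a harmless overall constant, with the paper's $\phi_t = A_t(f(x_t)-f(x_*)) + B_t\Vert z_t-x_*\Vert^2$ with $A_t = t^2/(2L)$, $B_t = 1$ in the convex case and $A_t = e^{\sqrt{\mu/L}\,t}$, $B_t = (\mu/2)e^{\sqrt{\mu/L}\,t}$ in the strongly convex case. The only cosmetic difference is that your algebra in the strongly convex case closes exactly to $-\tfrac{\mu}{2}\sqrt{\mu/L}\,e^{\sqrt{\mu/L}\,t}\Vert x-z\Vert^2 \leq 0$, whereas the paper reaches zero via a slightly lossier Cauchy--Schwarz bound on $\langle z-x_*, x-z\rangle$ that discards this term.
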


\begin{proof}[Sketch]
	A complete and rigorous proof is given in Appendix \ref{ap:proof-continuized}. Here, we only provide the heuristic of the main lines of the proof. 
	
	The proof is similar to the one of Nesterov acceleration: we prove that for some choices of parameters $\eta_t, \eta_t', \gamma_t, \gamma_t'$, $t \geq 0$, and for some functions $A_t, B_t$, $t \geq 0$, 
	\begin{equation*}
	\phi_t = A_t\left(f(x_t)-f(x_*)\right) + B_t \Vert z_t - x_* \Vert^2  
	\end{equation*}
	is a supermartingale. In particular, this implies that $\E\phi_t$ is a Lyapunov function, i.e., a non-increasing function of $t$. 
	
	To prove that $\phi_t$ is a supermartingale, it is sufficient to prove that for all infinitesimal time intervals $[t,t+\diff t]$, $\E_t \phi_{t+\diff t} \leq \phi_t$, where $\E_t$ denotes the conditional expectation knowing all the past of the Poisson process up to time $t$. Thus we would like to compute the first order variation of $\E_t \phi_{t+\diff t}$. This implies computing the first order variation of $\E_t f(x_{t+\diff t})$. 
	
	From \eqref{eq:continuized-1}, we see that $f(x_t)$ evolves for two reasons between $t$ and $t + \diff t$:
	\begin{itemize}
		\item $x_t$ follows the linear ODE \eqref{eq:continuous-part-1}, which results in the infinitesimal variation $f(x_t) \rightarrow f(x_t) + \eta_t \langle \nabla f(x_t), z_t-x_t \rangle \diff t$, and
		\item with probability $\diff t$, $x_t$ takes a gradient step, which results in a macroscopic variation $f(x_t) \rightarrow f\left(x_t - \gamma_t \nabla f(x_t)\right)$. 
	\end{itemize} 
	Combining both variations, we obtain that 
	\begin{equation*}
	\E_tf(x_{t+\diff t}) \approx f(x_t) + \eta_t \langle \nabla f(x_t), z_t-x_t \rangle \diff t + \diff t \left(f\left(x_t - \gamma_t \nabla f(x_t)\right) - f(x_t)\right) \, ,
	\end{equation*}
	where the $\diff t$ in the second term corresponds to the probability that a gradient step happens; note that the latter event is independent of the past up to time $t$. 
	
	A similar computation can be done for $\E_t \Vert z_t - x_* \Vert^2$. Putting things together, we obtain 
	\begin{align*}
	\E_t \phi_{t+\diff t} - \phi_t \approx \diff t \bigg(&\frac{\diff A_t}{\diff t}(f(x_t)-f(x_*))+A_t\eta_t\langle \nabla f(x_t), z_t-x_t\rangle \\
	&- A_t \left(f(x_t-\gamma_t \nabla f(x_t))-f(x_t)\right) + \frac{\diff B_t}{\diff t} \Vert z_t - x_* \Vert^2 \\
	&+ 2B_t \eta_t' \langle z_t - x_*, x_t - z_t \rangle + B_t \big(\Vert z_t - \gamma_t'\nabla f(x_t)-x_* \Vert^2 - \Vert z_t - x_* \Vert^2 \big)\bigg) \, .
	\end{align*}
	Using convexity and strong convexity inequalities, and a few computations, we obtain the following upper bound: 
	\begin{align*}
	\E_t \phi_{t+\diff t} - \phi_t \lesssim\diff t \bigg(&\left(\frac{\diff A_t}{\diff t} - A_t \eta_t\right)  \langle \nabla f(x_t), x_t- x_* \rangle + \left(\frac{\diff B_t }{\diff t}- B_t\eta_t'\right) \Vert z_t - x_* \Vert^2 \\
	&+ (A_t \eta_t-2 B_t \gamma_t')  \langle \nabla f(x_t), z_t- x_* \rangle + \left(B_t \eta_t'- \frac{\diff A_t}{\diff t} \frac{\mu}{2}\right) \Vert x_t - x_* \Vert^2\\ 
	&  +\left( B_t \gamma_t'^2 -A_t \gamma_t \left(1 - \frac{L\gamma_t}{2}\right)\right)  \Vert \nabla f(x_t) \Vert^2 \bigg) \, .
	\end{align*}
	We want this infinitesimal variation to be non-positive. Here, we choose the parameters so that $\gamma_t = 1/L$, and all prefactors in the above expression are zero. This gives some constraints on the choices of parameters. We show that only one degree of freedom is left: the choice of the function $A_t$, that must satisfy the ODE 
	\begin{equation*}
	\frac{\diff^2}{\diff t^2} \left(\sqrt{A_t}\right) = \frac{\mu}{4L} \sqrt{A_t} \, ,
	\end{equation*}
	but whose initialization remains free. Once the initialization of the function $A_t$ is chosen, this determines the full function $A_t$ and, through the constraints, all parameters of the algorithm. As $\phi_t$ is a supermartingale (by design), a bound on the performance of the algorithm is given by
	\begin{align*}
	\E f(x_t) - f(x_*) \leq \frac{\E \phi_t}{A_t} \leq \frac{\phi_0}{A_t} \, .
	\end{align*}
	The results presented in Theorem \ref{thm:continuized} correspond to one special choice of initialization for the function~$A_t$. 
	
	In this sketch of proof, our derivation of the infinitesimal variation is intuitive and elementary; however it can be made more rigorous and concise---albeit more technical---using classical results from stochastic calculus, namely Proposition \ref{prop:ito}. This is our approach in Appendix \ref{ap:proof-continuized}.
\end{proof}

Many authors have proposed continuous-time equivalents in order to understand better Nesterov acceleration using differential calculus, see the numerous references in the introduction. For instance, in the seminal work of \cite{su2014differential}, the equivalence is obtained from Nesterov acceleration by taking the joint asymptotic where the stepsizes vanish and the number of iterates is rescaled. The resulting limit is an ODE that must be discretized to be implemented; choosing the right discretization is not straightforward as it introduces stability and approximation errors that must be controlled, see \citep{zhang2018direct,shi2019acceleration,sanz2020connections}. 

On the contrary, our continuous time equivalent \eqref{eq:continuized-1}-\eqref{eq:continuized-2} does not correspond to a limit where the stepsizes vanish. However, in Appendix \ref{ap:scaling-limit}, we check that the continuized acceleration has the same ODE scaling limit as Nesterov acceleration. This sanity check emphasizes that the continuized acceleration is fundamentally different from previous continuous-time equivalents.   


\section{Discrete implementation of the continuized implementation with random parameters}
\label{sec:implementation}

In this section, we show that the continuized acceleration can be implemented exactly as a discrete algorithm. Denote 
\begin{align*}
&\tilde{x}_k = x_{T_{k}} \, ,  &&\tilde{y}_k = x_{T_{k+1}-} \, , &&\tilde{z}_k = z_{T_{k}} \, .  
\end{align*}
The three sequences $\tilde{x}_k, \tilde{y}_k, \tilde{z}_k$, $k \geq 0$, satisfy a recurrence relation of the same structure as Nesterov acceleration, but with random weights.

\begin{thm}[Discrete version of continuized acceleration]
	\label{thm:discretization}
	For any stochastic process of the form \eqref{eq:continuized-1}-\eqref{eq:continuized-2}, we have
	\begin{align}
	&\tilde{y}_k = \tilde{x}_k + \tau_k(\tilde{z}_k-\tilde{x}_k) \, ,  \label{eq:discretization-1}\\
	&\tilde{x}_{k+1} = \tilde{y}_k - \tilde{\gamma}_k \nabla f(\tilde{y}_k) \, , \\
	&\tilde{z}_{k+1} = \tilde{z}_k + \tau_k'(\tilde{y}_k-\tilde{z}_k) - \tilde{\gamma}_k' \nabla f(\tilde{y}_k ) \, , \label{eq:discretization-3}
	\end{align}
	for some random parameters $\tau_k, \tau_k', \tilde{\gamma}_k, \tilde{\gamma}_k'$ (that are functions of $T_k, T_{k+1}, \eta_t, \eta_t', \gamma_t, \gamma_t'$).
	\begin{enumerate}
		\item For the parameters of Theorem \ref{thm:continuized}.\ref{it:cvx},
		$\tau_k = 1 - \left(\frac{T_k}{T_{k+1}}\right)^2$, $\tau_k' = 0$, $\tilde{\gamma}_k = \frac{1}{L}$, and $\tilde{\gamma}_k' = \frac{T_k}{2L}$.
		\item For the parameters of Theorem \ref{thm:continuized}.\ref{it:str-cvx}, $\tau_k = \frac{1}{2}\left(1 - \exp\left(-2\sqrt{\frac{\mu}{L}}(T_{k+1}-T_k)\right)\right)$, \\$\tau_k' = \tanh\left(\sqrt{\frac{\mu}{L}}(T_{k+1}-T_k)\right)$, $\tilde{\gamma}_k = \frac{1}{L}$, and $\tilde{\gamma}_k' = \frac{1}{\sqrt{\mu L}}$.		
	\end{enumerate}
\end{thm}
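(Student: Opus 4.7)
The plan is to exploit the fact that the Poisson measure $\diff N(t)$ is supported on the discrete set $\{T_1,T_2,\dots\}$, so on each open interval $(T_k,T_{k+1})$ the stochastic system \eqref{eq:continuized-1}--\eqref{eq:continuized-2} reduces to the deterministic linear ODE \eqref{eq:continuous-part-1}--\eqref{eq:continuous-part-2}. Consequently $(x_{T_{k+1}-},z_{T_{k+1}-})$ is a deterministic linear function of $(\tilde x_k,\tilde z_k)$ determined by the flow $\Phi(T_k,T_{k+1})$ of this ODE, and the single gradient jump at $T_{k+1}$ then produces $(\tilde x_{k+1},\tilde z_{k+1})$. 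The whole theorem thus reduces to (i) solving the linear ODE in closed form on one excursion, and (ii) reading off the coefficients of the Nesterov-type recursion.

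For the general form \eqref{eq:discretization-1}--\eqref{eq:discretization-3}, I would first note that $(x,z)=(c,c)$ is a stationary solution for every constant $c$, so the flow $\Phi(T_k,T_{k+1})$ preserves the diagonal and each of its rows sums to one. This is exactly what lets me write $\tilde y_k=\tilde x_k+\tau_k(\tilde z_k-\tilde x_k)$ and $z_{T_{k+1}-}=\tilde z_k+\beta_k(\tilde x_k-\tilde z_k)$ for suitable random scalars $\tau_k,\beta_k$; the algebraic identity $\tilde y_k-\tilde z_k=(1-\tau_k)(\tilde x_k-\tilde z_k)$ then rewrites the $z$-update in the Nesterov form with $\tau_k'=\beta_k/(1-\tau_k)$, while the jump at $T_{k+1}$ contributes $\tilde\gamma_k=\gamma_{T_{k+1}}$ and $\tilde\gamma_k'=\gamma'_{T_{k+1}}$. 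For item~(1), the choice $\eta_t'\equiv 0$ makes $z_t$ constant on each excursion (hence $\beta_k=0$, $\tau_k'=0$), and the scalar equation $\dot x_t=(2/t)(\tilde z_k-x_t)$ is solved using the integrating factor $t^2$, giving $x_t=(T_k/t)^2\tilde x_k+[1-(T_k/t)^2]\tilde z_k$ and therefore $\tau_k=1-(T_k/T_{k+1})^2$. For item~(2), the ODE has constant coefficients $\eta:=\sqrt{\mu/L}$, and I would diagonalize it via $s_t=x_t+z_t$, $d_t=x_t-z_t$, which satisfy $\dot s_t=0$ and $\dot d_t=-2\eta d_t$; inverting at $t=T_{k+1}$ with $u_k:=T_{k+1}-T_k$ yields $\tilde y_k=\tilde x_k+\tfrac12(1-e^{-2\eta u_k})(\tilde z_k-\tilde x_k)$ and $z_{T_{k+1}-}=\tilde z_k+\tfrac12(1-e^{-2\eta u_k})(\tilde x_k-\tilde z_k)$, so $\tau_k=\tfrac12(1-e^{-2\eta u_k})$ and $\tau_k'=\tau_k/(1-\tau_k)=(1-e^{-2\eta u_k})/(1+e^{-2\eta u_k})=\tanh(\eta u_k)$.

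There is no substantial conceptual obstacle here: the argument is closed-form ODE integration on a single excursion plus bookkeeping. The two care points I expect are (i) the c\`adl\`ag convention, which pins $\tilde y_k=x_{T_{k+1}-}$ to be the value after the continuous mixing on $[T_k,T_{k+1})$ but \emph{before} the gradient jump, so that the gradient is evaluated at $\tilde y_k$ with the stepsize at time $T_{k+1}$; and (ii) re-expressing the $\tilde z_{k+1}$ update, which naturally comes out affine in $\tilde x_k-\tilde z_k$, in the Nesterov form featuring $\tilde y_k-\tilde z_k$, which is precisely what turns a bare exponential into a $\tanh$ in the strongly convex case.
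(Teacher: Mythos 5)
Your proof is correct and follows essentially the same route as the paper: decompose each inter-arrival interval $(T_k,T_{k+1})$ into a deterministic linear mixing flow followed by a single gradient jump at $T_{k+1}$, exploit the affine (diagonal-preserving, row-sum-one) structure of the flow to get $\tilde y_k=\tilde x_k+\tau_k(\tilde z_k-\tilde x_k)$ and $z_{T_{k+1}-}=\tilde z_k+\tau_k''(\tilde x_k-\tilde z_k)$, then re-express the $z$-update in terms of $\tilde y_k-\tilde z_k$ via $\tau_k'=\tau_k''/(1-\tau_k)$, and finally integrate the ODE in closed form (integrating factor $t^2$ in the convex case; the $s=x+z$, $d=x-z$ diagonalization in the strongly convex case, which the paper performs by writing the solution in midpoint-plus-deviation form). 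The only thing worth noting: your identification $\tilde\gamma_k=\gamma_{T_{k+1}}$, $\tilde\gamma'_k=\gamma'_{T_{k+1}}$ is the one that follows from the definitions $\tilde x_{k+1}=x_{T_{k+1}}=x_{T_{k+1}-}-\gamma_{T_{k+1}}\nabla f(x_{T_{k+1}-})$; in the convex case this gives $\tilde\gamma'_k=T_{k+1}/(2L)$, whereas the theorem (and the paper's appendix) states $T_k/(2L)$, which appears to be an off-by-one slip there. Since $\gamma_t\equiv 1/L$ is constant and $\tau_k'=0$ in that case, this does not affect Theorems~\ref{thm:continuized} or~\ref{thm:continuized-discrete}, but your bookkeeping is the internally consistent one.
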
 
This theorem is proved in Appendix \ref{ap:proof-thm-discretization}.  
\smallskip 

\begin{figure}
	\includegraphics[width=0.49\linewidth]{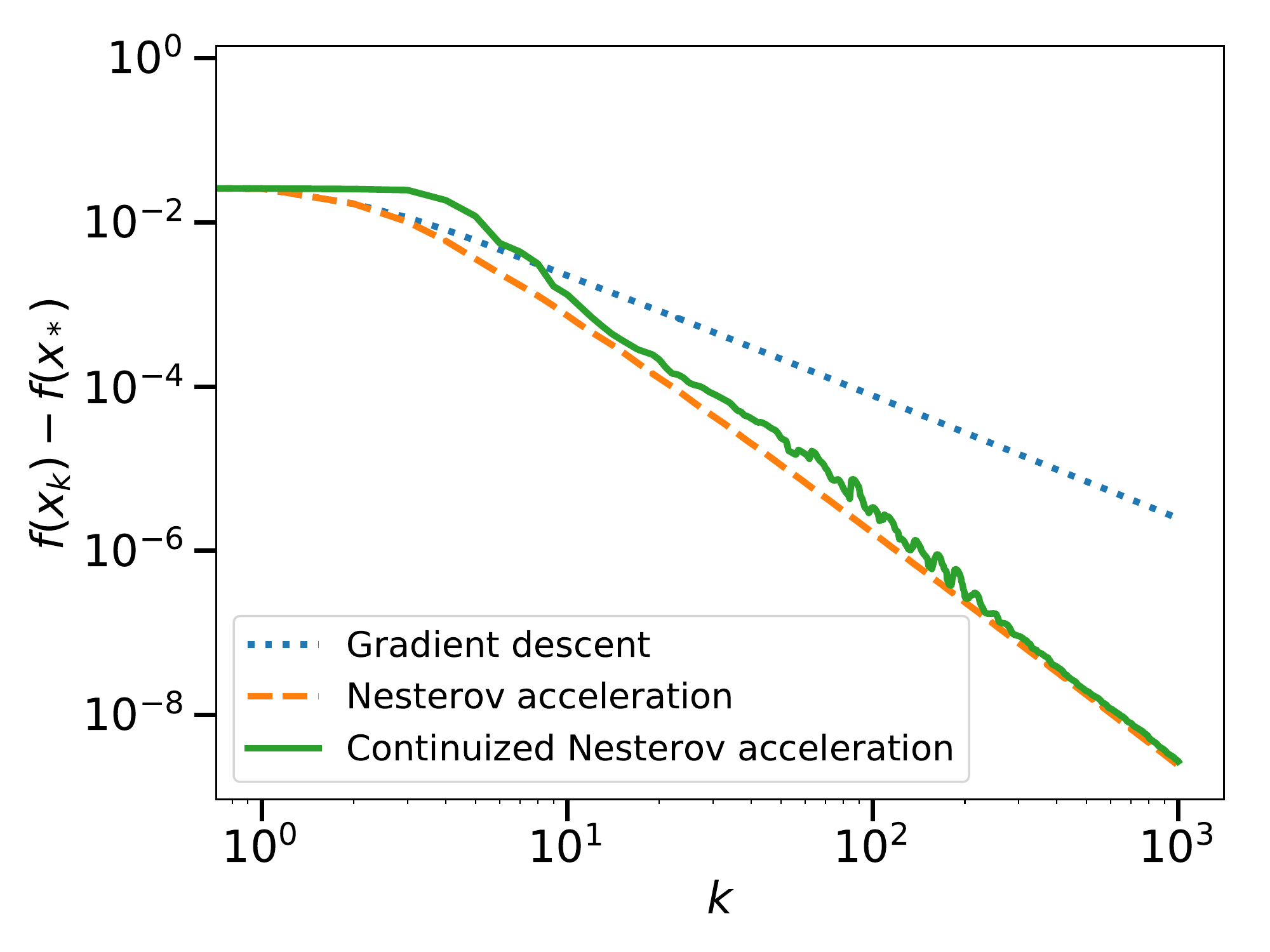}
	\includegraphics[width=0.49\linewidth]{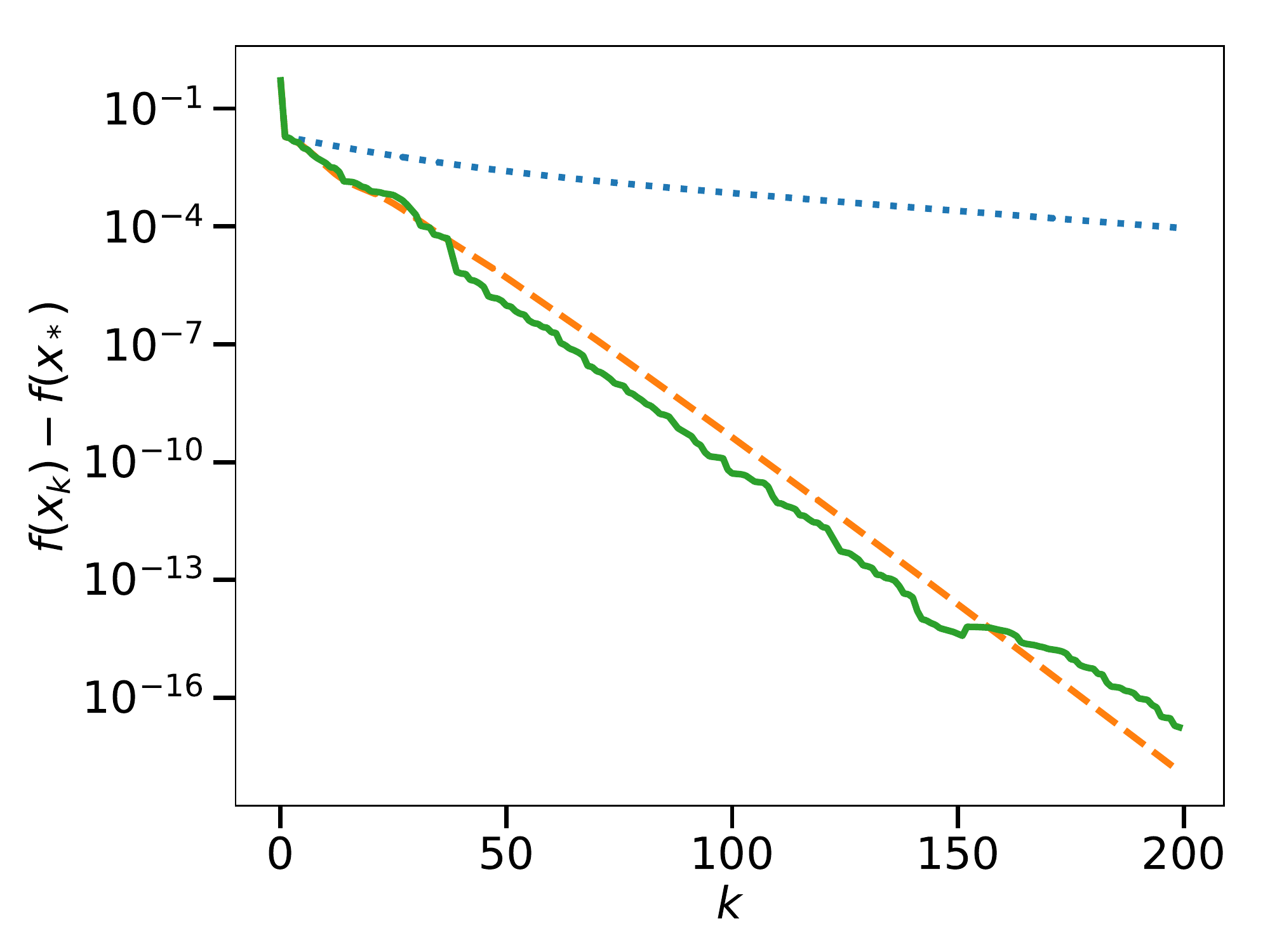}
	\caption{Comparison between gradient descent, Nesterov acceleration, and the continuized version of Nesterov acceleration, on a convex function (left) and a strongly convex function (right). For the continuized acceleration, which is randomized, the results shown corresponds to a single run. (Results were stable across runs.) }
	\label{fig:comparison}
\end{figure}


In Figure \ref{fig:comparison}, we compare this continuized Nesterov acceleration \eqref{eq:discretization-1}-\eqref{eq:discretization-3} with the classical Nesterov acceleration \eqref{eq:nest-1}-\eqref{eq:nest-3} and gradient descent. In the strongly convex case (right), we run the algorithms with the parameters of Theorem \ref{thm:nesterov}.\ref{it:nest-str-cvx} and \ref{thm:discretization}.\ref{it:str-cvx} on the function 
\begin{equation*}
f(x_1, x_2, x_3) = \frac{\mu}{2}(x_1-1)^2 + \frac{3\mu}{2}(x_2-1)^2 + \frac{L}{2}(x_3-1)^2 \, , 
\end{equation*} with $\mu = 10^{-2}$ and $L = 1$. In the convex case, we run the algorithms with the parameters of Theorem \ref{thm:nesterov}.\ref{it:nest-cvx} and \ref{thm:discretization}.\ref{it:cvx} on the function 
\begin{equation*}
f(x_1, \dots, x_{100})= \frac{1}{2}\sum_{i = 1}^{100} \frac{1}{i^2}\left(x_i - \frac{1}{i}\right)^2 \, , 
\end{equation*}
which has negligible strong convexity parameter. All iterations were initialized from $x_0 = z_0 = 0$.  
\smallskip

In order to have a straightforward theoretical comparison with Nesterov acceleration, we describe the performance $f(\tilde{x}_{k}) - f(x_*) = f(x_{T_k}) - f(x_*)$ of the continuized acceleration in terms of the number $k$ of gradient operations. 

\begin{thm}[Convergence of the discretized version]
	\label{thm:continuized-discrete}
	The discrete implementation \eqref{eq:discretization-1}-\eqref{eq:discretization-3}, with random weights, of the continuized acceleration, satisfies: 
	\begin{enumerate}
		\item For the parameters of Theorem \ref{thm:discretization}.\ref{it:cvx},
		\begin{align*}
		\E \left[T_k^2\left(f(\tilde{x}_{k}) - f(x_*)\right) \right] \leq {2L\Vert z_0 -x_* \Vert^2} \, .
		\end{align*}
		\item Assume further that $f$ is $\mu$-strongly convex, $\mu > 0$. For the parameters of Theorem \ref{thm:discretization}.\ref{it:str-cvx}, \begin{align*}
		\E \left[\exp \left(\sqrt{\frac{\mu}{L}}T_k\right)\left(f(\tilde{x}_{k}) - f(x_*)\right) \right] \leq f(x_0) - f(x_*) + \frac{\mu}{2} \Vert z_0 - x_* \Vert^2  \, .
		\end{align*}	
	\end{enumerate}
\end{thm}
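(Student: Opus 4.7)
The plan is to deduce both inequalities by applying the optional stopping theorem to the supermartingale constructed in the proof of Theorem \ref{thm:continuized}, now stopped at the random time $T_k$ instead of evaluated at a deterministic time $t$. Recall that this supermartingale has the form
\[
\phi_t = A_t(f(x_t) - f(x_*)) + B_t \Vert z_t - x_* \Vert^2,
\]
with deterministic nonnegative functions $A_t, B_t$ determined by the chosen parameters.

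First I would identify $A_t$ and $B_t$ explicitly in each of the two cases. In the convex case of Theorem \ref{thm:continuized}.\ref{it:cvx}, the ODE for $\sqrt{A_t}$ reduces to $(\sqrt{A_t})'' = 0$, and the remaining constraints listed in the proof sketch (in particular $A_t \eta_t = 2B_t \gamma_t'$ and $B_t \gamma_t'^2 = A_t \gamma_t(1-L\gamma_t/2)$) force $A_t = t^2/2$ and $B_t \equiv L$, so that $\phi_0 = L\Vert z_0 - x_*\Vert^2$. In the strongly convex case of Theorem \ref{thm:continuized}.\ref{it:str-cvx}, the same constraints give $A_t = \exp(\sqrt{\mu/L}\,t)$ and $B_t = (\mu/2)\exp(\sqrt{\mu/L}\,t)$, so that $\phi_0 = f(x_0)-f(x_*) + (\mu/2)\Vert z_0 - x_*\Vert^2$. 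A sanity check that $\phi_0/A_t$ recovers exactly the deterministic rate stated in Theorem \ref{thm:continuized} confirms these identifications.

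Second, the random time $T_k$, being the $k$-th jump time of the Poisson point process, is a stopping time for the natural filtration $(\cF_t)_{t \geq 0}$, almost surely finite but unbounded. I would therefore apply optional stopping at the bounded stopping time $T_k \wedge n$ to obtain $\E[\phi_{T_k \wedge n}] \leq \phi_0$. Since $A_t, B_t \geq 0$, $f(x_t) \geq f(x_*)$, and $\Vert z_t - x_*\Vert^2 \geq 0$, the process $\phi_t$ is nonnegative; because it is also c\`adl\`ag, $\phi_{T_k \wedge n} \to \phi_{T_k}$ almost surely as $n \to \infty$, and Fatou's lemma gives $\E[\phi_{T_k}] \leq \phi_0$.

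Third, dropping the nonnegative term $B_{T_k}\Vert z_{T_k} - x_* \Vert^2$ from $\phi_{T_k}$ and using $\tilde{x}_k = x_{T_k}$ yields $\E[A_{T_k}(f(\tilde{x}_k) - f(x_*))] \leq \phi_0$. Substituting the explicit expressions for $A_t$ and $\phi_0$ in each case produces exactly the two claimed bounds. The only real technical obstacle is justifying optional stopping at the unbounded stopping time $T_k$, and nonnegativity of $\phi_t$ makes the Fatou route painless; everything else is routine given the supermartingale structure from Theorem \ref{thm:continuized}.
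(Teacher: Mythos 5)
Your proof is correct and follows essentially the same route as the paper: both apply optional stopping to the supermartingale $\phi_t = A_t\bigl(f(x_t)-f(x_*)\bigr) + B_t\Vert z_t - x_*\Vert^2$ at the jump time $T_k$, drop the nonnegative $B_{T_k}$-term, and substitute the explicit $A_t$, $B_t$ (your normalization of $(A_t,B_t)$ differs from the paper's by a constant factor, which is immaterial). Your detour through the bounded stopping times $T_k \wedge n$ followed by Fatou's lemma is in fact a little more careful than the paper's one-line invocation of the martingale stopping theorem, whose stated form requires uniform integrability that the paper does not verify.
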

This theorem is proved in Appendix \ref{ap:proof-continuized}. The law of $T_k$ is well known: it is the sum of $k$ i.i.d.~random variables of law exponential with rate $1$; this is called an Erlang or Gamma distribution with shape parameter $k$ and rate $1$. One can use well-known properties of this law, such as its concentration around its expectation $\E T_k = k$, to derive corollaries of Theorem \ref{thm:continuized-discrete}.

\section{Robustness of the continuized Nesterov acceleration to additive noise}
\label{sec:robustness}

We now investigate how the continuized version of Nesterov acceleration performs under stochastic noise. We should emphasize that a similar study has been done on Nesterov acceleration directly \citep{lan2012optimal,hu2009accelerated,xia2010dual,devolder2011stochastic,cohen2018acceleration,aybat2020robust}. However, in the continuized framework, the randomness of the stochastic gradient and its time mix in a particularly convenient way. 

We assume that we do not have direct access to the gradient $\nabla f(x)$ but to a random estimate $\nabla f(x, \xi)$, where $\xi \in \Xi$ is random of law $\cP$. We assume that our estimate is unbiased, i.e.,
\begin{equation}
\label{eq:unbiased}
\forall x \in \R^d \, , \qquad \E_\xi \nabla f(x, \xi) = \nabla f(x) \, ,
\end{equation}
and has a uniformly bounded variance, i.e., there exists $\sigma^2 \geq 0$ such that 
\begin{equation}
\label{eq:bounded-variance}
\forall x \in \R^d \, , \qquad \E_\xi \left\Vert \nabla f(x, \xi) - \nabla f(x) \right\Vert^2 \leq \sigma^2 \, .
\end{equation}
These assumptions typically hold in the additive noise model, where $\nabla f(x,\xi) = \nabla f(x) + \xi$, where $\xi \in \R^d$ is satisfies $\E\xi = 0$, $\E \Vert \xi \Vert^2 \leq \sigma^2$. By an abuse of terminology, we say that our stochastic gradients have ``additive noise'' when \eqref{eq:unbiased} and \eqref{eq:bounded-variance} hold. 

\medskip
We keep the same algorithms, replacing gradients by stochastic gradients. Let $\xi_1, \xi_2, \dots$ be i.i.d.~random variables of law $\cP$. We take stochastic gradient steps at the random times $T_1, T_2, \dots$, 
\begin{align*}
x_{T_k} = x_{T_k-} - \gamma_{T_k} \nabla f (x_{T_k-}, \xi_k) \, , \\
z_{T_k} = z_{T_k-} - \gamma_{T_k}' \nabla f (x_{T_k-}, \xi_k) \, . 
\end{align*}
Between these random times, the variables mix through the same ODE
\begin{align*}
&\diff x_t = \eta_t (z_t - x_t) \diff t \, ,  \\
&\diff z_t = \eta_t'(x_t - z_t) \diff t \, .
\end{align*}
This can be written more compactly in terms of the Poisson point measure $\diff N(t,\xi) = \sum_{k\geq 0} \delta_{(T_k,\xi_k)}(\diff t, \diff \xi)$ on $\R_{\geq 0} \times \Xi$, which has intensity $\diff t \otimes \cP$,
\begin{align}
\diff x_t = \eta_t (z_t - x_t) \diff t - \gamma_t \int_{\Xi}\nabla f(x_t, \xi) \diff N(t,\xi) \, , \label{eq:continuized-sgd-additive-1}\\
\diff z_t = \eta_t' (x_t - z_t) \diff t- \gamma_t'  \int_{\Xi}\nabla f(x_t, \xi) \diff N(t,\xi) \, . \label{eq:continuized-sgd-additive-2}
\end{align}

\begin{thm}[Continuized acceleration with noise]
	\label{thm:additive-noise}
	Assume that the stochastic gradients are unbiased \eqref{eq:unbiased} and have a variance uniformly bounded by $\sigma^2$ \eqref{eq:bounded-variance}. Then the continuized acceleration \eqref{eq:continuized-sgd-additive-1}-\eqref{eq:continuized-sgd-additive-2} satisfies the following. 
	\begin{enumerate}
		\item For the parameters of Theorem \ref{thm:continuized}.\ref{it:cvx},
		\begin{align*}
		\E f(x_t) - f(x_*) \leq \frac{2L\Vert z_0 -x_* \Vert^2}{t^2} + \sigma^2 \frac{t}{3L} \, .
		\end{align*}
		\item Assume further that $f$ is $\mu$-strongly convex, $\mu > 0$. For the parameters of Theorem \ref{thm:continuized}.\ref{it:str-cvx}, 
		\begin{align*}
		\E f(x_t) - f(x_*) \leq \left(f(x_0) - f(x_*) + \frac{\mu}{2} \Vert z_0 - x_* \Vert^2\right) \exp \left(-\sqrt{\frac{\mu}{L}}t\right) + \sigma^2 \frac{1}{\sqrt{\mu L}} \, .
		\end{align*}	
	\end{enumerate}
\end{thm}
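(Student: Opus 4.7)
The plan is to reuse, with minimal modification, the Lyapunov argument behind Theorem~\ref{thm:continuized}. Keeping the same candidate $\phi_t = A_t(f(x_t)-f(x_*)) + B_t \Vert z_t - x_* \Vert^2$ and the same functions $A_t, B_t, \eta_t, \eta_t', \gamma_t, \gamma_t'$ as in the noiseless proof, we just have to quantify the extra terms produced by the variance of the stochastic gradient. Applying the Itô formula for marked Poisson random measures (Proposition~\ref{prop:ito}) to \eqref{eq:continuized-sgd-additive-1}-\eqref{eq:continuized-sgd-additive-2} decomposes $\frac{\diff}{\diff t}\E \phi_t$ into the drift of the mixing ODE (identical to the noiseless case) and the $\xi$-average of the macroscopic jump contributions against the intensity $\diff t \otimes \cP$.

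For the $x$-jump, $L$-smoothness gives
\[
f(x_t - \gamma_t \nabla f(x_t,\xi)) \leq f(x_t) - \gamma_t \langle \nabla f(x_t), \nabla f(x_t,\xi)\rangle + \tfrac{L\gamma_t^2}{2}\Vert \nabla f(x_t,\xi)\Vert^2,
\]
and taking $\E_\xi$ with \eqref{eq:unbiased}-\eqref{eq:bounded-variance} reproduces the deterministic right-hand side of the noiseless proof plus an extra $\tfrac{L\gamma_t^2}{2}\sigma^2$. For the $z$-jump, expanding $\Vert z_t - \gamma_t'\nabla f(x_t,\xi)-x_*\Vert^2$ and averaging yields an analogous extra $\gamma_t'^2\sigma^2$. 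All remaining terms still aggregate into the non-positive bracket isolated in Theorem~\ref{thm:continuized}'s sketch, so
\[
\frac{\diff}{\diff t}\E \phi_t \leq \sigma^2 \Bigl(\tfrac{L\gamma_t^2}{2} A_t + \gamma_t'^2 B_t\Bigr) = \sigma^2 A_t \gamma_t,
\]
where the last equality uses the noiseless parameter constraint $B_t \gamma_t'^2 = A_t \gamma_t(1-L\gamma_t/2)$ that made the $\Vert \nabla f(x_t)\Vert^2$-prefactor vanish.

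Integrating with $\gamma_t = 1/L$ and dividing by $A_t$ gives
\[
\E f(x_t) - f(x_*) \leq \frac{\phi_0}{A_t} + \frac{\sigma^2}{L A_t}\int_0^t A_s\,\diff s,
\]
whose first summand is precisely the noiseless bound of Theorem~\ref{thm:continuized}. In the convex case, $A_t$ is proportional to $t^2$, so $(\int_0^t A_s\,\diff s)/A_t = t/3$ and the noise correction equals $\sigma^2 t/(3L)$. In the strongly convex case, $A_t$ is proportional to $\exp(\sqrt{\mu/L}\,t)$, so the ratio equals $(1 - e^{-\sqrt{\mu/L}\,t})/\sqrt{\mu/L} \leq \sqrt{L/\mu}$, yielding the stated $\sigma^2/\sqrt{\mu L}$.

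The main technical step is the stochastic-calculus bookkeeping: correctly compensating the marked Poisson measure $\diff N(t,\xi)$ of intensity $\diff t \otimes \cP$ so that the jump contributions integrate against $\cP$ as written. This is a direct application of the toolkit of Appendix~\ref{ap:toolbox}; everything else only repeats the variance bound \eqref{eq:bounded-variance} and the algebraic identities already used in the noiseless proof.
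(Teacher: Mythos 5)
Your proposal is correct and matches the paper's own proof (Appendix~\ref{ap:proof-additive}) in every essential step: same Lyapunov candidate, same application of Proposition~\ref{prop:ito} to the marked Poisson measure, the same use of unbiasedness and bounded variance to isolate the extra $\sigma^2$ contributions, the same cancellation via the noiseless parameter constraints giving an excess drift of $\sigma^2 A_t/L$, and the same evaluation of $\int_0^t A_s\,\diff s / A_t$ in the two cases. The only cosmetic difference is that you phrase the bound as $\frac{\diff}{\diff t}\E\phi_t \leq \cdots$ rather than via the explicit decomposition $\phi_t = \phi_0 + \int_0^t I_s\,\diff s + M_t$, and you simplify $\frac{L\gamma_t^2}{2}A_t + \gamma_t'^2 B_t = A_t\gamma_t$ using the constraint $B_t\gamma_t'^2 = A_t\gamma_t(1-L\gamma_t/2)$ where the paper just plugs in $\gamma_t = 1/L$ to see both terms equal $A_t/(2L)$.
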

This theorem is proved in Appendix \ref{ap:proof-additive}.  

In the above bounds, $L$ is a parameter of the algorithm, that can be taken greater than the best known smoothness constant of the function $f$. Increasing $L$ reduces the stepsizes of the algorithm and performs some variance reduction. If the bound $\sigma^2$ on the variance is known, one can choose $L$ optimizing the above bounds in order to obtain algorithms that adapt to additive noise. 

\begin{figure}
	\includegraphics[width=0.49\linewidth]{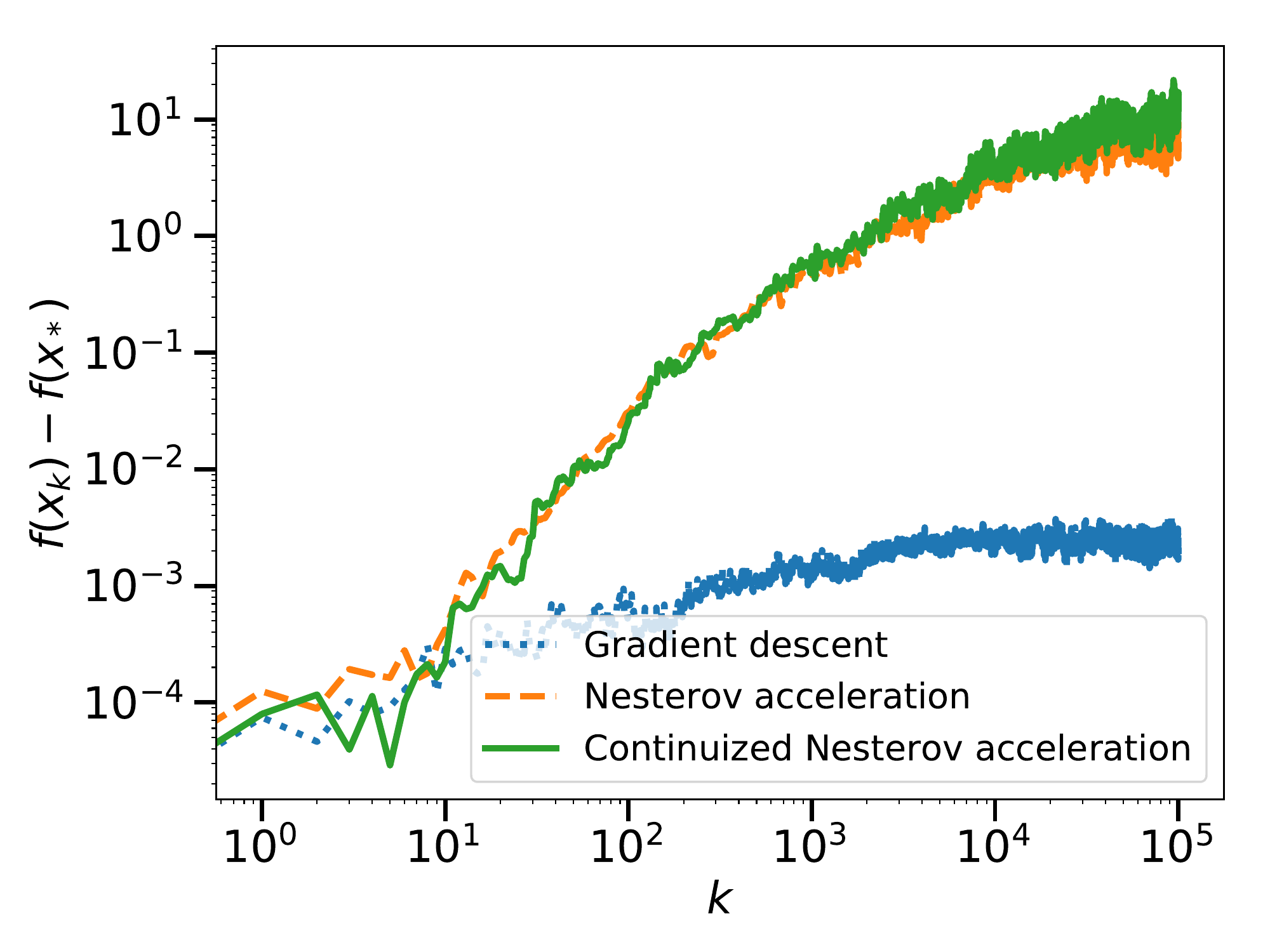}
	\includegraphics[width=0.49\linewidth]{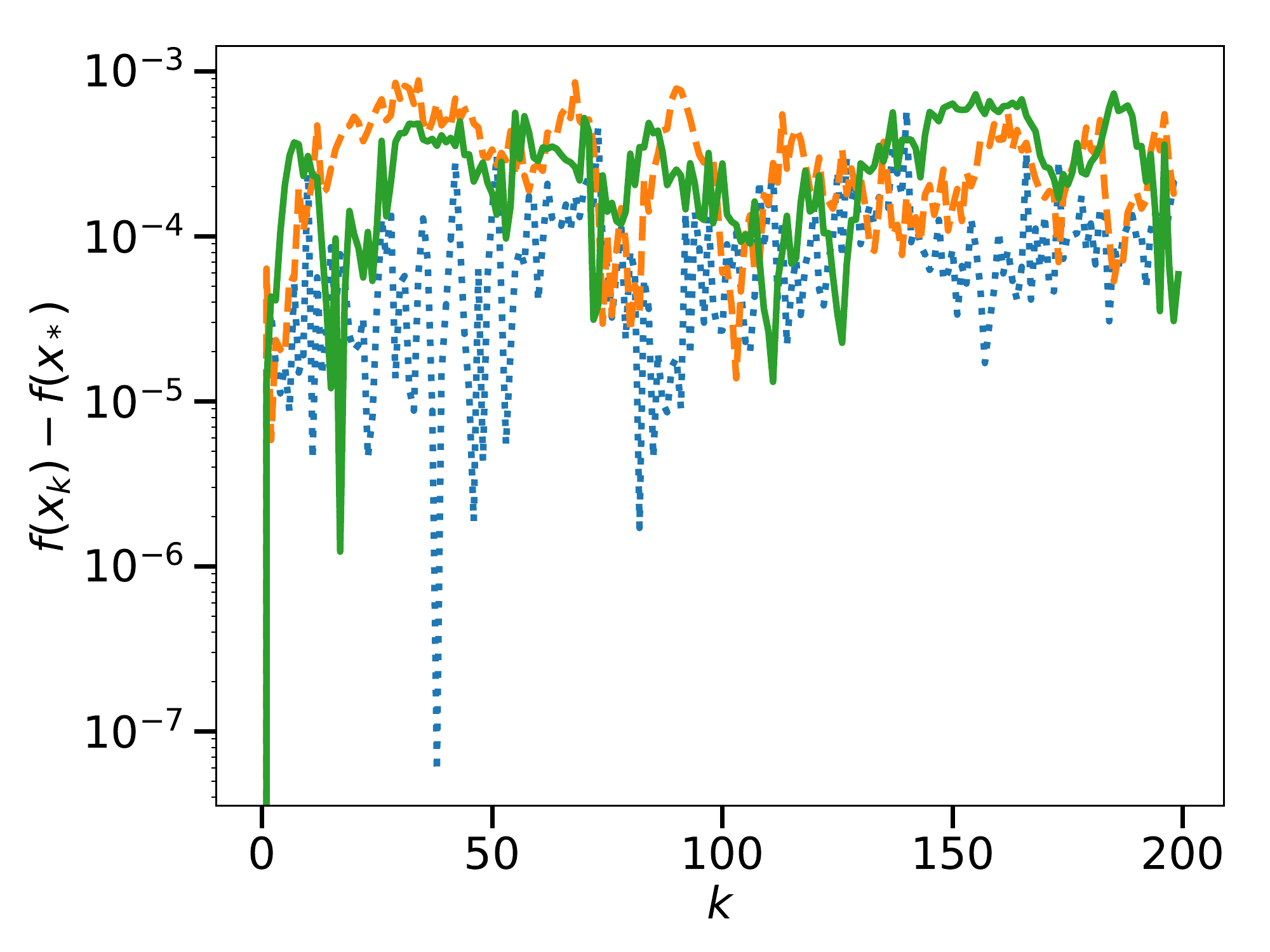}
	\caption{Effect of additive noise on gradient descent, Nesterov acceleration, and the continuized version of Nesterov acceleration, on a convex function (left) and a strongly convex function (right). The results shown corresponds to a single run. (Results were stable across runs.) }
	\label{fig:comparison-add}
\end{figure}
In Figure \ref{fig:comparison-add}, we run the same simulations as in Figure \ref{fig:comparison}, with two differences: (1) we add isotropic Gaussian noise on the gradients, with covariance $10^{-4}\Id$, and (2) we initialized algorithms at the optimum, i.e., $x_0 = z_0 = x_*$. Initializing at the optimum enables to isolate the effect of the additive noise only. These simulations confirm Theorem \ref{thm:additive-noise}: the noise term is (sub-)linearly increasing in the convex case and constant in the strongly convex case. 

Note that similarly to Theorem \ref{thm:continuized-discrete}, one could obtain convergence bounds for the discrete implementation under the presence of additive noise.

\section{Conclusion}
In this work, we introduced a continuized version of Nesterov's accelerated gradients. In a nutshell, the method has two sequences of iterates from which gradient steps are taken at random times. In between gradient steps, the two sequences mix following a simple ordinary differential equation, whose parameters are picked for ensuring good convergence properties of the method.

As compared to other continuous time models of Nesterov acceleration, a key feature of this approach is that the method can be implemented without any approximation step, as the differential equation governing the mixing procedure has a simple analytical solution. When discretized, the continuized method corresponds to an accelerated gradient method with random parameters.  

Continuization strategies were introduced in the context of Markov chains~\citep{aldous1995reversible}. Here, they allow using acceleration mechanisms in asynchronous distributed optimization, where agents are usually not aware of total the number of iterations taken so far, as showcased in the context of asynchronous gossip algorithms by~\citet{even2020asynchrony}. Possible future research directions include extending to constrained and non-Euclidean settings. 

\section*{Acknowledgements}

This work was funded in part by the French government under management of Agence Nationale de la Recherche as part of the “Investissements d’avenir” program, reference ANR-19-P3IA-0001 (PRAIRIE 3IA Institute). We also acknowledge support from the European Research Council (grant SEQUOIA 724063) and from the DGA. 

\bibliographystyle{apalike}
\bibliography{bibliography}

\appendix

\newpage

\section{Stochastic calculus toolbox}
\label{ap:toolbox}

In this appendix, we give a short introduction to the mathematical tools that we use in this paper. For more details, the reader can consult the more rigorous monographs of \cite{jacod2013limit,ikeda2014stochastic,le2016brownian}.

\subsection{Poisson point measures}

We fix $\cP$ a probability law on some space $\Xi$. 

\begin{definition}
	A \emph{(homogenous) Poisson point measure} on $\R_{\geq 0} \times \Xi$, with intensity $\nu(\diff t, \diff \xi) = \diff t \otimes \diff \cP(\xi)$, is a random measure $N$ on $\R_{\geq 0} \times \Xi$ such that
	\begin{itemize}
		\item For any disjoint measurable subsets $A$ and $B$ of $\R_{\geq 0} \times \Xi$, $N(A)$ and $N(B)$ are independent.
		\item For any measurable subset $A$ of $\R_{\geq 0} \times \Xi$, $N(A)$ is a Poisson random variable with parameter $\nu(A)$. (If $\nu(A) = \infty$, $N(A)$ is equal to $\infty$ almost surely.)  
	\end{itemize}
\end{definition}

\begin{proposition}
	\label{prop:decomposition-poisson-measure}
	Let $N$ be a Poisson point measure on $\R_{\geq 0} \times \Xi$ with intensity $\diff t \otimes \diff \cP(\xi)$. 
	
	There exists a decomposition $\diff N(t,\xi) = \sum_{k\geq 0} \delta_{(T_k,\xi_k)}(\diff t, \diff \xi)$ on $\R_{\geq 0} \times \Xi$ where $0 < T_1 < T_2 < T_3 < \dots$ and $\xi_1, \xi_2, \xi_3, \dots \in \Xi$ satisfy:
	\begin{itemize}
		\item $T_1, T_2-T_1, T_3 - T_2, \dots$ are i.i.d.~of law exponential with rate $1$,
		\item $\xi_1, \xi_2, \xi_3, \dots$ are i.i.d.~of law $\cP$ and independent of the $T_1, T_2, T_3, \dots$. 
	\end{itemize}
\end{proposition}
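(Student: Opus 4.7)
The plan is to first extract the time coordinates by projecting $N$ onto $\R_{\geq 0}$, then use a thinning/marking argument to describe the laws of the marks $\xi_k$.

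First I would consider the marginal measure $N'$ on $\R_{\geq 0}$ defined by $N'(A) = N(A \times \Xi)$. For disjoint Borel sets $A, B \subset \R_{\geq 0}$, the sets $A \times \Xi$ and $B \times \Xi$ are disjoint, so $N'(A)$ and $N'(B)$ are independent, and $N'(A)$ is Poisson with parameter $\nu(A \times \Xi) = |A|$ (Lebesgue measure). Thus $N'$ is a Poisson point measure on $\R_{\geq 0}$ with intensity $\diff t$. Since $\E N'([0,n]) = n < \infty$, $N'$ is almost surely locally finite, and since $N'(\R_{\geq 0}) = \infty$ a.s., its atoms can be enumerated as $0 < T_1 < T_2 < \dots$ with $T_k \to \infty$. (One also checks there are no multiple atoms a.s.: $\P(N'(\{t\}) \geq 1) = 0$ for each $t$, and Fubini rules out double atoms.)

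Next I would compute the joint law of $(T_k)_{k \geq 1}$. The void probability formula gives $\P(T_1 > t) = \P(N'([0,t]) = 0) = e^{-t}$, hence $T_1 \sim \mathrm{Exp}(1)$. More generally, for $0 = t_0 < t_1 < \dots < t_n$, the independence and Poisson law of the counts on $[t_{i-1}, t_i]$ combine to give, via the standard computation, the joint density $e^{-t_n}$ of $(T_1, \dots, T_n)$ on $\{0 < t_1 < \dots < t_n\}$, which is exactly the density of partial sums of i.i.d.~$\mathrm{Exp}(1)$ variables. This proves the first bullet.

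For the marks, given the realization $(T_k)_{k \geq 1}$, for each $k$ there is a.s.~a unique $\xi_k \in \Xi$ such that $N(\{T_k\} \times \Xi) = 1$ with mass at $\xi_k$, so the decomposition $\diff N = \sum_k \delta_{(T_k, \xi_k)}$ is well-defined. To identify the law, I would use a thinning argument: for any measurable $B \subset \Xi$, the restricted measure $N_B(A) := N(A \times B)$ is a Poisson point measure on $\R_{\geq 0}$ with intensity $\cP(B) \diff t$ by the definition of $N$, and for disjoint $B_1, \dots, B_m$ the measures $N_{B_1}, \dots, N_{B_m}$ are independent. Writing, for a measurable partition $\Xi = B_1 \sqcup \dots \sqcup B_m$,
\[
\P\bigl(T_1 > t_1, \xi_1 \in B_{i_1}, \dots, T_n - T_{n-1} > \tau_n, \xi_n \in B_{i_n}\bigr),
\]
and expanding in terms of counts of the independent Poisson processes $N_{B_1}, \dots, N_{B_m}$ on successive intervals, one obtains a product of $\cP(B_{i_j})$ with the joint density of the $T_j$. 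This shows the marks are i.i.d.~with law $\cP$ and independent of the time atoms.

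The main obstacle is the bookkeeping in the last step: one must carefully translate the independence of $N_{B_1}, \dots, N_{B_m}$ and their Poisson laws with intensities $\cP(B_j)\diff t$ into a statement about the successive marks of the ordered atoms. This can be made clean by either using the transformation theorem for Poisson point measures (equivalently, the mapping theorem applied to the deterministic map to the first coordinate) or by a direct induction on $n$ using the strong Markov property of the underlying Poisson process at time $T_k$, which shows that shifted measure $\diff N(T_k + \cdot, \cdot)$ is again a Poisson measure of the same intensity, independent of $\cF_{T_k}$.
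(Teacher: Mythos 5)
The paper itself does not prove this proposition: it is stated without proof in the stochastic-calculus toolbox appendix, where the authors explicitly refer the reader to the monographs of Jacod--Shiryaev, Ikeda--Watanabe and Le Gall for the background. There is therefore no in-paper proof to compare against, and the relevant question is only whether your outline is correct.

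Your plan is the standard decomposition argument for a marked Poisson process, and the high-level structure is right: project to the time marginal $N'$, verify it is a rate-one Poisson point measure on $\R_{\geq 0}$, identify and order its atoms, compute the joint density of $(T_1,\dots,T_n)$ from void probabilities, then read off the marks and identify their law by independence of $N$ on $A\times B_i$ for disjoint $B_i$. Two points deserve tightening. The remark that ``Fubini rules out double atoms'' is not a justification; the clean argument for simplicity is that a Poisson random measure with diffuse intensity is a.s.~simple (e.g.\ condition on $N'([0,T])=n$, under which the atoms are i.i.d.\ uniform on $[0,T]$ and hence a.s.\ distinct). More substantively, you yourself flag that the final identification of the mark law is left as bookkeeping; as written it is a sketch, not a proof. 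Either of the routes you name would close it: the marking/restriction theorem for Poisson point measures (conditionally on the time projection, marks are i.i.d.\ of law $\cP$), or the strong Markov property at $T_k$ combined with the observation that $N(\{T_k\}\times\cdot)$ is a single unit atom whose location has law $\cP$ and is independent of the past. Carrying out one of these would make the argument complete; as it stands it is a plausible and correct-in-spirit outline of the textbook result the paper invokes.
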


\begin{definition}
	Let $N$ be a Poisson point measure on $\R_{\geq 0} \times \Xi$ with intensity $\diff t \otimes \diff \cP(\xi)$. The \emph{filtration $\cF_t$, $t \geq 0$, generated by $N$} is defined by the formula
	\begin{align*}
	\cF_t = \sigma\left( N([0,s]\times A) \, , \, s\leq t, A \subset \Xi \text{ measurable}\right) \, .
	\end{align*}
\end{definition}

\subsection{Martingales and supermartingales}

Let $(\Omega, \cF, \P)$ be a probability space and $\cF_t$, $t \geq 0$, a filtration on this probability space.

\begin{definition}
	A random process $x_t \in \R^d$, $t \geq 0$, is \emph{adapted} if for all $t \geq 0$, $x_t$ is $\cF_t$-measurable. 
	An adapted process $x_t \in \R$, $t \geq 0$ is a \emph{martingale} (resp.~\emph{supermartingale}) if for all $0 \leq s \leq t$, $\E[x_t | \cF_s] = x_s$ (resp.~$\E[x_t | \cF_s] \leq x_s$).
\end{definition}

\begin{definition}
	A random variable $T \in [0, \infty]$ is a \emph{stopping time} if for all $t \geq 0$, $\{T \leq t\} \in \cF_t$. 
\end{definition}

\begin{definition}
	\label{def:cadlag}
	A function $x_t, t \geq 0$, is said to be \emph{c\`adl\`ag} if it is right continuous and for every $t > 0$, the limit $x_{t-} := \lim_{s \to t, s < t} x_s$ exists and is finite. 
\end{definition}

\begin{thm}[Martingale stopping theorem]
	\label{thm:stopping}
	Let $x_t$, $ t \geq 0$, be a martingale (resp.~supermartingale) with c\`adl\`ag trajectories and uniformly integrable. Let $T$ be a stopping time. Then $\E X_T = X_0$ (resp.~$\E X_T \leq X_0$).
\end{thm}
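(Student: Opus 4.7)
The plan is to reduce to the bounded stopping time case and then pass to the limit using uniform integrability. This is the standard route for the continuous-time optional stopping theorem, which I would organize as follows.

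First, I would establish the result for bounded stopping times $T \leq N$ by dyadic discretization. Define $T_k = \lceil 2^k T \rceil / 2^k$, which are stopping times taking finitely many values and satisfy $T_k \downarrow T$. The restriction of $X$ to dyadic times $j/2^k$ is a discrete-time (super)martingale, so the classical discrete-time optional stopping theorem yields $\E X_{T_k} = X_0$ (resp.~$\leq X_0$). Using the c\`adl\`ag property to pass $X_{T_k} \to X_T$ almost surely as $k \to \infty$, together with uniform integrability on $[0, N+1]$ (which holds since $X_t = \E[X_{N+1}|\cF_t]$ in the martingale case; in the supermartingale case one uses the Doob--Meyer decomposition or a direct maximal-inequality argument), one obtains $\E X_T = \lim_k \E X_{T_k} = X_0$.

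Second, for an unbounded stopping time $T$, I would apply the bounded version to $T \wedge n$ to get $\E X_{T \wedge n} = X_0$ (resp.~$\leq X_0$) for all $n \geq 0$. The uniform integrability of $(X_t)_{t \geq 0}$ implies via the martingale convergence theorem that $X_t \to X_\infty$ almost surely and in $L^1$ for some $X_\infty \in L^1$. Using the c\`adl\`ag paths and the convention $X_T = X_\infty$ on $\{T = \infty\}$, one has $X_{T \wedge n} \to X_T$ almost surely. The final step is to check uniform integrability of the family $\{X_{T \wedge n}\}_{n \geq 0}$: in the martingale case this follows from the identity $X_{T \wedge n} = \E[X_\infty | \cF_{T \wedge n}]$ (conditional expectations of a fixed $L^1$ variable form a UI family); in the supermartingale case one argues analogously after a suitable decomposition. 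Passing to the limit in expectation then yields the desired equality (resp.~inequality).

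The main obstacle is the supermartingale half of the argument: the representation $X_t = \E[X_\infty | \cF_t]$ fails in general, so proving UI of $\{X_{T \wedge n}\}_n$ requires either the Doob--Meyer decomposition $X = M - A$ (with $M$ a UI martingale and $A$ a non-decreasing predictable process) or a direct argument splitting $X$ into positive and negative parts and bounding each separately via Doob's maximal inequality. In the present paper this result is invoked only as a convenient black box from the standard references cited in Appendix~\ref{ap:toolbox}, so in practice I would simply cite one of \citep{jacod2013limit,ikeda2014stochastic,le2016brownian} rather than reproduce the full technical argument.
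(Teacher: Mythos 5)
The paper does not actually prove this statement: it appears in Appendix~\ref{ap:toolbox} as background, and the appendix opens by pointing to \citep{jacod2013limit,ikeda2014stochastic,le2016brownian} for the rigorous development, so the ``paper's proof'' is a citation. Your sketch is a sound outline of the standard argument (dyadic discretization reduces bounded stopping times to the discrete optional stopping theorem, then truncation via $T \wedge n$ combined with uniform integrability handles the general case), and you correctly identify both the one genuine technical hurdle (uniform integrability of $\{X_{T\wedge n}\}_n$ in the supermartingale case, where the representation $X_t = \E[X_\infty \mid \cF_t]$ is unavailable) and the practical conclusion that for a toolbox lemma one simply cites a textbook, which is exactly what the paper does. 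One remark worth adding: in the only place the paper invokes Theorem~\ref{thm:stopping} for a supermartingale, namely applying it to $\phi_t = A_t(f(x_t)-f(x_*)) + B_t\|z_t - x_*\|^2$ with the stopping time $T_k$ in Appendix~\ref{ap:proof-continuized}, the process is non-negative; for a non-negative right-continuous supermartingale, Fatou's lemma applied to $X_{T\wedge n} \to X_T$ gives $\E X_T \leq \liminf_n \E X_{T\wedge n} \leq X_0$ with no uniform integrability needed, so the obstacle you flag can be sidestepped entirely in the case that actually matters here.
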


\subsection{Stochastic ordinary differential equation with Poisson jumps}

We fix $\cP$ a probability law on some space $\Xi$, $N$ a Poisson point measure on $\R_{\geq 0} \times \Xi$ with intensity $\diff t \otimes \diff \cP(\xi)$, and denote $\cF_t$, $t \geq 0$, the filtration generated by $N$. 

\begin{definition}
	Let $b: \R^d \to \R^d$ and $G: \R^d \times \Xi \to \R^d$ be two functions. An random process $x_t \in \R^d$, $t \geq 0$, is said to be a solution of the equation
	\begin{align*}
	\diff x_t = b(x_t) \diff t + \int_{\Xi} G(x_t, \xi) \diff N(t, \xi)
	\end{align*}
	if it is adapted, c\`adl\`ag, and for all $t \geq 0$,
	\begin{align*}
	x_t =  x_0 + \int_{0}^{t}b(x_s) \diff s + \int_{[0,t]\times \Xi} G(x_{s-}, \xi) \diff N(s, \xi) \, .
	\end{align*}
	If we consider the decomposition $\diff N(t,\xi) = \sum_{k\geq 0} \delta_{(T_k,\xi_k)}(\diff t, \diff \xi)$ given by Proposition \ref{prop:decomposition-poisson-measure}, then 
	\begin{align*}
	\int_{[0,t]\times \Xi} G(x_{s-}, \xi) \diff N(s, \xi) = \sum_{k \geq 0} \bfone_{\{T_k \leq t\}} G(x_{T_k-}, \xi_k) \, .
	\end{align*}
\end{definition}

\begin{proposition}
	\label{prop:ito}
	Let $x_t \in \R^d$ be a solution of
	\begin{align*}
	\diff x_t = b(x_t) \diff t + \int_{\Xi} G(x_t, \xi) \diff N(t, \xi)
	\end{align*}
	and $\varphi : \R^d \to \R$ be a smooth function. Then 
	\begin{align*}
	\varphi(x_t) = \varphi(x_0) + \int_{0}^{t} \langle \nabla \varphi(x_s), b(x_s) \rangle \diff s + \int_{[0,t]\times \Xi}\left(\varphi(x_{s-}+G(x_{s-}, \xi) ) - \varphi(x_{s-}) \right) \diff N(s, \xi) \, . 
	\end{align*}
	Moreover, we have the decomposition 
	\begin{align*}
	&\int_{[0,t]\times \Xi} \left(\varphi(x_{s-}+G(x_{s-}, \xi) ) - \varphi(x_{s-}) \right) \diff N(s, \xi)  \\
	&\qquad = 	\int_{0}^{t}\int_{\Xi} \left(\varphi(x_{s}+G(x_{s}, \xi) ) - \varphi(x_{s} )\right) \diff t \diff \cP(\xi) + M_t \, , 
	\end{align*}
	where $M_t = \int_{[0,t]\times \Xi} \left(\varphi(x_{s-}+G(x_{s-}, \xi) ) - \varphi(x_{s-}) \right) (\diff N(s, \xi) - \diff t \diff \cP(\xi))$ is a martingale. 
\end{proposition}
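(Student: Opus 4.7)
My plan is to decompose the evolution of $\varphi(x_t)$ into contributions from the smooth ODE-flow between jumps and from the discrete jumps themselves, and then to recognize the jump contribution as the sum of its compensator and a martingale. Using the decomposition of $N$ from Proposition~\ref{prop:decomposition-poisson-measure}, fix $t \geq 0$ and let $0 = T_0 < T_1 < T_2 < \dots$ be the jump times, with $K(t) = \max\{k : T_k \leq t\}$ finite almost surely since $N([0,t]\times \Xi)$ is Poisson with finite parameter $t$. On each interval $[T_{k}, T_{k+1})$ the process is governed by the ordinary differential equation $\dot{x}_s = b(x_s)$, so the trajectory is continuously differentiable there and I can apply the usual chain rule to the smooth function $\varphi$.

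The first step is a telescoping identity:
\begin{align*}
\varphi(x_t) - \varphi(x_0) = \sum_{k=0}^{K(t)} \bigl(\varphi(x_{(T_{k+1}\wedge t)-}) - \varphi(x_{T_k})\bigr) + \sum_{k=1}^{K(t)} \bigl(\varphi(x_{T_k}) - \varphi(x_{T_k-})\bigr).
\end{align*}
On $[T_k, T_{k+1}\wedge t)$, the fundamental theorem of calculus combined with $\dot{x}_s = b(x_s)$ yields $\varphi(x_{(T_{k+1}\wedge t)-}) - \varphi(x_{T_k}) = \int_{T_k}^{T_{k+1}\wedge t} \langle \nabla\varphi(x_s), b(x_s)\rangle \diff s$, and summing over $k$ reconstructs $\int_0^t \langle \nabla\varphi(x_s), b(x_s)\rangle \diff s$ (the boundary values do not matter since the set of jump times has Lebesgue measure zero). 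The jump contribution is, by definition of the jump, $\varphi(x_{T_k}) - \varphi(x_{T_k-}) = \varphi(x_{T_k-} + G(x_{T_k-},\xi_k)) - \varphi(x_{T_k-})$, which rewrites exactly as the integral $\int_{[0,t]\times\Xi} (\varphi(x_{s-} + G(x_{s-},\xi)) - \varphi(x_{s-}))\diff N(s,\xi)$. This establishes the first displayed equation.

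For the second identity I would invoke the standard compensation result for Poisson point measures: if $H(s,\xi)$ is a predictable integrand (here $H(s,\xi) = \varphi(x_{s-} + G(x_{s-},\xi)) - \varphi(x_{s-})$, which is left-continuous in $s$ hence predictable, and measurable in $\xi$), then
\begin{align*}
M_t = \int_{[0,t]\times \Xi} H(s,\xi)\bigl(\diff N(s,\xi) - \diff s\, \diff \cP(\xi)\bigr)
\end{align*}
is a local martingale with respect to the filtration $\cF_t$ generated by $N$. This is the content of the compensation formula for Poisson random measures, obtained by verifying on simple integrands (indicators of disjoint measurable rectangles, where the martingale property reduces to the centering of Poisson increments $N(A) - \nu(A)$) and extending by linearity and monotone class arguments. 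Adding and subtracting the compensator $\int_0^t\int_\Xi H(s,\xi)\diff s\,\diff \cP(\xi)$ gives precisely the stated decomposition.

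The main obstacle is justifying that $M_t$ is an honest martingale (not merely a local martingale) with the appropriate integrability. For generic $\varphi, b, G$ one would need $\E\int_0^t\int_\Xi |H(s,\xi)|\,\diff s\,\diff\cP(\xi) < \infty$, which is not automatic. In the applications of this paper, however, $\varphi$ is a polynomial in $f(x_t)$ and $\|z_t - x_*\|^2$ evaluated along iterates that remain in a bounded region in expectation, so a standard localization argument (stopping at the exit time of a ball and letting the radius go to infinity, combined with the a priori Lyapunov estimates produced in the body of the proof) upgrades the local martingale to a genuine martingale and justifies taking expectations. I would handle this caveat with a brief localization remark rather than in the main derivation.
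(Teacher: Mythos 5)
The paper does not actually prove this proposition; it states it as a known result and defers the details to the cited monographs, offering only a one-sentence heuristic: sum the continuous and jump contributions, then subtract the compensator to obtain a martingale. Your argument is the standard proof of the It\^o formula for pure-jump Poisson stochastic differential equations, and it matches that heuristic exactly. The telescoping decomposition by jump times is the right first step; the fundamental theorem of calculus applies between jumps because on each interval $[T_k, T_{k+1})$ the solution satisfies the ordinary differential equation $\frac{\diff x_s}{\diff s} = b(x_s)$; the jump terms reproduce the $\diff N$ integral via the atomic decomposition of $N$ from Proposition~\ref{prop:decomposition-poisson-measure}; and the predictability of the integrand $H(s,\xi) = \varphi(x_{s-}+G(x_{s-},\xi)) - \varphi(x_{s-})$ (left-continuity in $s$, measurability in $\xi$) is precisely what makes the compensation formula applicable. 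Your caveat about local martingales versus genuine martingales is the one real subtlety that the paper's statement elides: as written, the proposition asserts that $M_t$ is a martingale without any integrability hypothesis, whereas the compensation theorem in general only yields a local martingale. Since the paper subsequently takes expectations of $\phi_t$ and applies the optional stopping theorem without comment, a localization argument or a moment estimate is indeed being used implicitly; flagging this is an improvement on the paper's treatment rather than a gap in your own.
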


This proposition is an elementary calculus of variations formula: to compute the value of the observable $\varphi(x_t)$, one must sum the effects of the continuous part and of the Poisson jumps. Moreover, the integral with respect to the Poisson measure $N$ becomes a martingale if the same integral with respect to its intensity measure $\diff t \otimes \diff \cP(\xi)$ is removed.

\section{Analysis of the continuized Nesterov acceleration}
\label{ap:analysis-continuized}

To encompass the proofs in the convex and in the strongly convex cases in a unified way, we assume $f$ is $\mu$-strongly convex, $\mu \geq 0$. If $\mu > 0$, this corresponds to assuming the $\mu$-strong convexity in the usual sense; if $\mu = 0$, it means that we only assume the function to be convex. In other words, the proofs in the convex case can be obtained by taking $\mu = 0$ below.   

In this section, $\cF_t$, $t\geq 0$, is the filtration associated to the Poisson point measure $N$.

\subsection{Noiseless case: proofs of Theorems \ref{thm:continuized} and \ref{thm:continuized-discrete}}
\label{ap:proof-continuized}

In this section, we analyze the convergence of the continuized iteration \eqref{eq:continuized-1}-\eqref{eq:continuized-2}, that we recall for the reader's convenience:
\begin{align*}
\diff x_{t} &= \eta_t (z_t-x_t) \diff t - \gamma_t \nabla f(x_t)\diff N(t)  \, ,\\
\diff z_{t} &= \eta_t' (x_t-z_t)\diff t - \gamma_t'\nabla f(x_t)\diff N(t) \, .
\end{align*}
The choices of parameters $\eta_t, \eta'_t, \gamma_t, \gamma'_t$, $t\geq 0$, and the corresponding convergence bounds follow naturally from the analysis. We seek sufficient conditions under which the function
\begin{equation*}
\phi_t = A_t\left(f(x_t)-f_*\right) + B_t \Vert z_t - x_* \Vert^2  
\end{equation*}
is a supermartingale. 

The process $\bar{x}_t = (t,x_t,z_t)$ satisfies the equation 
\begin{align*}
&\diff \bar{x}_t = b(\bar{x}_t) \diff t + G(\bar{x}_t) \diff N(t) \, , && b(\bar{x}_t) = \begin{pmatrix}
1 \\
\eta_t(z_t-x_t) \\
\eta_t'(x_t-z_t) 
\end{pmatrix} \, , &&G(\bar{x}_t) = \begin{pmatrix}
0 \\
-\gamma_t \nabla f(x_t) \\
-\gamma_t' \nabla f(x_t)
\end{pmatrix} \, .
\end{align*}
We thus apply Proposition \ref{prop:ito} to $\phi_t = \varphi(\bar{x}_t) = \varphi(t,x_t,z_t)$ where 
\begin{align*}
\varphi(t,x,z) = A_t\left(f(x)-f(x_*)\right) + B_t\Vert z - x_* \Vert^2 \, ,
\end{align*}
we obtain:
\begin{align*}
\phi_t = \phi_0 + \int_{0}^{t} \langle \nabla \varphi(\bar{x}_s), b(\bar{x}_s) \rangle \diff s + \int_{0}^{t} \left(\varphi(\bar{x}_{s}+G(\bar{x}_{s}) ) - \varphi(\bar{x}_{s}) \right) \diff s + M_t \, ,
\end{align*}
where $M_t$ is a martingale. Thus, to show that $\varphi_t$ is a supermartingale, it is sufficient to show that the map $t \mapsto \int_{0}^{t} \langle \nabla \varphi(\bar{x}_s), b(\bar{x}_s) \rangle \diff s + \int_{0}^{t} \left(\varphi(\bar{x}_{s}+G(\bar{x}_{s}) ) - \varphi(\bar{x}_{s}) )\right) \diff s$ is non-increasing almost surely, i.e., 
\begin{align*}
I_t := \langle \nabla \varphi(\bar{x}_t), b(\bar{x}_t) \rangle +\varphi(\bar{x}_{t}+G(\bar{x}_{t}) ) - \varphi(\bar{x}_{t})  \leq 0 \, . 
\end{align*}
We now compute 
\begin{align*}
\langle \nabla \varphi(\bar{x}_{t}), b(\bar{x}_{t}) \rangle &= \partial_t \varphi(\bar{x}_{t}) + \langle \partial_x \varphi(\bar{x}_{t}), \eta_t(z_t-x_t) \rangle + \langle \partial_z \varphi(\bar{x}_{t}), \eta_t'(x_t-z_t) \rangle \\
&= \frac{\diff A_t}{\diff t} \left(f(x_t)-f(x_*)\right) + \frac{\diff B_t}{\diff t} \Vert z_t - x_* \Vert^2 + A_t \eta_t \langle \nabla f(x_t), z_t - x_t \rangle \\
&\qquad +2 B_t \eta_t' \langle z_t-x_*, x_t-z_t \rangle \, .
\end{align*}
Here, we use that as $f$ is $\mu$-strongly convex, 
\begin{equation*}
f(x_t) - f(x_*) \leq \langle \nabla f(x_t), x_t- x_* \rangle - \frac{\mu}{2} \Vert x_t - x_* \Vert^2 \, , 
\end{equation*}
and the simple bound
\begin{align*}
\langle z_t-x_*, x_t-z_t \rangle &= \langle z_t-x_*, x_t-x_* \rangle - \Vert z_t - x_* \Vert^2 \leq \Vert z_t - x_* \Vert \Vert x_t - x_* \Vert - \Vert z_t - x_* \Vert^2 \\
&\leq \frac{1}{2} \left(\Vert z_t - x_* \Vert^2  + \Vert x_t - x_* \Vert^2 \right) - \Vert z_t - x_* \Vert^2 = \frac{1}{2} \left(\Vert x_t - x_* \Vert^2 -\Vert z_t - x_* \Vert^2 \right) \, .
\end{align*}
This gives
\begin{align}
\langle \nabla \varphi(\bar{x}_{t}), b(\bar{x}_{t}) \rangle &\leq
\left(\frac{\diff A_t}{\diff t} - A_t \eta_t\right)  \langle \nabla f(x_t), x_t- x_* \rangle + \left(B_t \eta_t'- \frac{\diff A_t}{\diff t} \frac{\mu}{2}\right)  \Vert x_t - x_* \Vert^2 \label{eq:aux-4}\\&+ \left(\frac{\diff B_t }{\diff t}- B_t\eta_t'\right) \Vert z_t - x_* \Vert^2
+ A_t \eta_t \langle \nabla f(x_t), z_t - x_* \rangle \, . \label{eq:aux-5}
\end{align}
Further, 
\begin{align*}
\varphi(\bar{x}_{t}+G(\bar{x}_{t}) ) - \varphi(\bar{x}_{t}) &= A_t\left(f(x_t-\gamma_t \nabla f(x_t))-f(x_t)\right) \\
&\qquad+ B_t \left(\Vert (z_t-x_*) - \gamma_t' \nabla f(x_t) \Vert^2 - \Vert z_t-x_*\Vert^2\right) \, .
\end{align*}
As $f$ is $L$-smooth, 
\begin{align*}
f(x_t-\gamma_t \nabla f(x_t))-f(x_t) &\leq \langle \nabla f(x_t) , - \gamma_t \nabla f(x_t) \rangle + \frac{L}{2} \Vert \gamma_t \nabla f(x_t) \Vert^2 \\
&= -\gamma_t \left(1 - \frac{L\gamma_t}{2}\right) \Vert \nabla f(x_t) \Vert^2 \, .
\end{align*}
This gives
\begin{align}
\varphi(\bar{x}_{t}+G(\bar{x}_{t}) ) - \varphi(\bar{x}_{t}) &\leq
\left( B_t \gamma_t'^2 -A_t \gamma_t \left(1 - \frac{L\gamma_t}{2}\right)\right)  \Vert \nabla f(x_t) \Vert^2 -2 B_t \gamma_t' \langle \nabla f(x_t), z_t- x_* \rangle  \, . \label{eq:aux-6}
\end{align}
Finally, combining \eqref{eq:aux-4}-\eqref{eq:aux-5} with \eqref{eq:aux-6}, we obtain
\begin{align}
I_t  &\leq \left(\frac{\diff A_t}{\diff t} - A_t \eta_t\right)  \langle \nabla f(x_t), x_t- x_* \rangle + \left(\frac{\diff B_t }{\diff t}- B_t\eta_t'\right) \Vert z_t - x_* \Vert^2 \label{eq:aux-7}\\
&\qquad+ (A_t \eta_t-2 B_t \gamma_t')  \langle \nabla f(x_t), z_t- x_* \rangle + \left(B_t \eta_t'- \frac{\diff A_t}{\diff t} \frac{\mu}{2}\right) \Vert x_t - x_* \Vert^2\\ 
&\qquad   +\left( B_t \gamma_t'^2 -A_t \gamma_t \left(1 - \frac{L\gamma_t}{2}\right)\right)  \Vert \nabla f(x_t) \Vert^2 \, . \label{eq:aux-8}
\end{align}
Remember that $I_t \leq 0$ is a sufficient condition for $\phi_t$ to be a supermartingale. Here, we choose the parameters $\eta_t, \eta_t', \gamma_t, \gamma_t', t \geq 0$, so that all prefactors are $0$. We start by taking $\gamma_t \equiv \frac{1}{L}$ (other choices $\gamma_t < \frac{2}{L}$ could be possible but would give similar results) and we want to satisfy
\begin{align*}
&\frac{\diff A_t}{\diff t} = A_t \eta_t \, , &&\frac{\diff B_t }{\diff t}= B_t\eta_t' 
&&A_t \eta_t = 2 B_t \gamma_t' \, , &&B_t \eta_t' = \frac{\diff A_t }{\diff t} \frac{\mu}{2} \, , 
&&B_t \gamma_t'^2 = \frac{A_t}{2L} \, .
\end{align*}
To satisfy the last equation, we choose 
\begin{equation}
\label{eq:formula-gamma_t'}
\gamma_t' = \sqrt{\frac{A_t}{2LB_t}} \, . 
\end{equation}
To satisfy the third equation, we choose
\begin{equation}
\label{eq:formula-eta_t}
\eta_t = \frac{2 B_t \gamma_t'}{A_t} = \sqrt{\frac{2B_t}{LA_t }} \, . 
\end{equation}
To satisfy the fourth equation, we choose
\begin{equation}
\label{eq:formula-eta_t'}
\eta_t' = \frac{\diff A_t}{\diff t} \frac{\mu}{2 B_t} = \frac{A_t \eta_t \mu}{2 B_t} = \mu \sqrt{\frac{A_t}{2L B_t}} \, . 
\end{equation}
Having now all parameters $\eta_t, \eta_t', \gamma_t, \gamma_t'$ constrained, we now have that $\phi_t$ is Lyapunov if
\begin{align*}
&\frac{\diff A_t}{\diff t} = A_t \eta_t = \sqrt{\frac{2A_t B_t}{L}} \, , &&\frac{\diff B_t}{\diff t} = B_t \eta'_t = \mu \sqrt{\frac{A_t B_t}{2L}} \, .
\end{align*}
This only leaves the choice of the initialization $(A_0, B_0)$ as free: both the algorithm and the Lyapunov depend on it. (Actually, only the relative value $A_0/B_0$ matters.) Instead of solving the above system of two coupled non-linear ODEs, it is convenient to turn them into a single second-order linear ODE:
\begin{align}
\label{eq:coupled-first-order}
&\frac{\diff}{\diff t}\left(\sqrt{A_t}\right) = \frac{1}{2\sqrt{A_t}} \frac{\diff A_t}{\diff t} = \sqrt{\frac{B_t}{2L}} \, , &&\frac{\diff}{\diff t}\left(\sqrt{B_t}\right) = \frac{1}{2\sqrt{B_t}} \frac{\diff B_t}{\diff t} = \mu \sqrt{\frac{A_t}{8L}} \, . 
\end{align}
This can also be restated as
\begin{align}
\label{eq:second-order}
&\frac{\diff^2}{\diff t^2} \left(\sqrt{A_t}\right) = \frac{\mu}{4L} \sqrt{A_t} \, , &&\sqrt{B_t} = \sqrt{2L} \frac{\diff }{\diff t}\left(\sqrt{A_t}\right) \, .
\end{align}

\subsubsection{Proof of the first part (convex case)}

We now assume $\mu = 0$, and we choose the solution such that $A_0 = 
0$ and $B_0 = 1$. From \eqref{eq:coupled-first-order}, we have $\frac{\diff}{\diff t}\left(\sqrt{B_t}\right) = 0$, thus $B_t \equiv 1$, and $\frac{\diff}{\diff t}\left(\sqrt{A_t}\right) = \frac{1}{\sqrt{2L}}$, thus $\sqrt{A_t} = \frac{t}{\sqrt{2L}}$. 
The parameters of the algorithm are given by \eqref{eq:formula-gamma_t'}-\eqref{eq:formula-eta_t'}: $\eta_t = \frac{2}{t}$, $\eta_t' = 0$, $\gamma'_t = \frac{t}{\sqrt{2L}}$ (and we had chosen $\gamma_t = \frac{1}{L}$).

From the fact that $\phi_t$ is a supermartingale, we obtain that the associated algorithm satisfies 
\begin{equation*}
\E f(x_t) - f(x_*) \leq \frac{\E \phi_t}{A_t} \leq \frac{\phi_0}{A_t} = \frac{2L\Vert z_0 - x_* \Vert^2}{t^2} \, .
\end{equation*} 
This proves the first part of Theorem \ref{thm:continuized}. 

Further, one can apply martingale stopping Theorem \ref{thm:stopping} to the supermartingale $\phi_t$ with the stopping time $T_k$ to obtain 
\begin{align*}
\E \left[ A_{T_k} \left(f(\tilde{x}_k) - f(x_*) \right) \right]  = \E \left[ A_{T_k} \left( f(x_{T_k}) - f(x_*) \right) \right] \leq \E \phi_{T_k} \leq \phi_0 = \Vert z_0 - x_* \Vert^2 \, . 
\end{align*}
This proves the first part of Theorem \ref{thm:continuized-discrete}.

\subsubsection{Proof of the second part (strongly convex case)}

We now assume $\mu > 0$. We consider the solution of \eqref{eq:second-order} that is exponential:
\begin{align*}
&\sqrt{A_t} = \sqrt{A_0} \exp\left(\frac{1}{2} \sqrt{\frac{\mu}{L}}t\right) \, , && \sqrt{B_t } = \sqrt{A_0} \sqrt{\frac{\mu}{2}} \exp\left(\frac{1}{2} \sqrt{\frac{\mu}{L}}t\right) \, .
\end{align*}
The parameters of the algorithm are given by \eqref{eq:formula-gamma_t'}-\eqref{eq:formula-eta_t'}: $\eta_t = \eta_t' =  \sqrt{\frac{\mu}{L}}$, $\gamma'_t = \frac{1}{\sqrt{\mu L}}$ (and we had chosen $\gamma_t = \frac{1}{L}$). 

From the fact that $\phi_t$ is a supermartingale, we obtain that the associated algorithm satisfies 
\begin{align*}
\E f(x_t) - f(x_*) \leq \frac{\E \phi_t}{A_t} \leq \frac{\phi_0}{A_t} &= \frac{A_0 (f(x_0)-f(x_*)) + A_0 \frac{\mu}{2}\Vert z_0 - x_* \Vert^2}{A_t} \\&= \left(f(x_0)-f(x_*) + \frac{\mu}{2}\Vert z_0 - x_* \Vert^2\right) \exp\left(- \sqrt{\frac{\mu}{L}}t\right)\, .
\end{align*}
This proves the second part of Theorem \ref{thm:continuized}. Similarly to above, one can also apply the martingale stopping theorem to prove the second part of Theorem \ref{thm:continuized-discrete}. 

\begin{remark}
	In the above derivation, in both the convex and strongly convex cases, we choose a particular solution of \eqref{eq:second-order}, while several solutions are possible. In the convex case, we make the choice $A_0 = 0$ to have a succinct bound that does not depend on $f(x_0) - f(x_*)$. More importantly, in the strongly convex case, we choose the solution that satisfies the relation $\sqrt{\frac{\mu}{2}} \sqrt{A_t} = \sqrt{B_t}$, which implies that $\eta_t, \eta_t', \gamma_t'$, are constant functions of $t$, and $\eta_t = \eta_t'$. These conditions help solving in closed form the continuous part of the process 
	\begin{align*}
	&\diff x_t = \eta_t (z_t - x_t) \diff t \, ,  \\
	&\diff z_t = \eta_t'(x_t - z_t) \diff t \, ,
	\end{align*}
	which is crucial if we want to have a discrete implementation of our method (for more details, see Theorem \ref{thm:discretization} and its proof). However, in the strongly convex case, considering other solutions would be interesting, for instance to have an algorithm converging to the convex one as $\mu \to 0$. 
\end{remark}

\subsection{With additive noise: proof of Theorem \ref{thm:additive-noise}}
\label{ap:proof-additive}

The proof of this theorem is along the same lines as the proof of Theorem \ref{thm:continuized} above. Here, we only give the major differences. 

We analyze the convergence of the continuized stochastic iteration \eqref{eq:continuized-sgd-additive-1}-\eqref{eq:continuized-sgd-additive-2}, that we recall for the reader's convenience:
\begin{align*}
\diff x_t = \eta_t (z_t - x_t) \diff t - \gamma_t \int_{\Xi}\nabla f(x_t, \xi) \diff N(t,\xi) \, , \\
\diff z_t = \eta_t' (x_t - z_t) \diff t- \gamma_t'  \int_{\Xi}\nabla f(x_t, \xi) \diff N(t,\xi) \, . 
\end{align*}
In this setting, we loose the property that
\begin{equation*}
\phi_t = A_t\left(f(x_t)-f_*\right) + B_t \Vert z_t - x_* \Vert^2  
\end{equation*}
is a supermartingale. However, we bound the increase of $\phi_t$.

The process $\bar{x}_t = (t,x_t,z_t)$ satisfies the equation 
\begin{align*}
&\diff \bar{x}_t = b(\bar{x}_t) \diff t + \int_{\Xi}G(\bar{x}_t,\xi) \diff N(t,\xi), && b(\bar{x}_t) = \begin{pmatrix}
1 \\
\eta_t(z_t-x_t) \\
\eta_t'(x_t-z_t) 
\end{pmatrix}, &&G(\bar{x}_t,\xi) = \begin{pmatrix}
0 \\
-\gamma_t \nabla f(x_t,\xi) \\
-\gamma_t' \nabla f(x_t,\xi)
\end{pmatrix}.
\end{align*}
We apply Proposition \ref{prop:ito} to $\phi_t = \varphi(\bar{x}_t) = \varphi(t,x_t,z_t)$
and obtain
\begin{align}
\label{eq:decomposition-phi-additive}
\phi_t = \phi_0 + \int_{0}^{t} I_s \diff s + M_t \, ,
\end{align}
where $M_t$ is a martingale and
\begin{align*}
I_t = \langle \nabla \varphi(\bar{x}_t), b(\bar{x}_t) \rangle +\E_\xi\varphi(\bar{x}_{t}+G(\bar{x}_{t}, \xi) ) - \varphi(\bar{x}_{t}) \, . 
\end{align*}
The computation of the first term remains the same: the inequality \eqref{eq:aux-4}-\eqref{eq:aux-5} holds. The computation of the second term becomes 
\begin{align*}
\E_\xi \varphi(\bar{x}_{t}+G(\bar{x}_{t},\xi) ) - \varphi(\bar{x}_{t}) &= A_t\left(\E_\xi f(x_t-\gamma_t \nabla f(x_t,\xi))-f(x_t)\right) \\
&\qquad+ B_t \left(\E_\xi\Vert (z_t-x_*) - \gamma_t' \nabla f(x_t,\xi) \Vert^2 - \Vert z_t-x_*\Vert^2\right) \, .
\end{align*}
As $f$ is $L$-smooth, 
\begin{align*}
f(x_t-\gamma_t \nabla f(x_t,\xi))-f(x_t) &\leq \langle \nabla f(x_t) , - \gamma_t \nabla f(x_t,\xi) \rangle + \frac{L}{2} \Vert \gamma_t \nabla f(x_t,\xi) \Vert^2  \, , \\
\E_\xi f(x_t-\gamma_t \nabla f(x_t,\xi))-f(x_t) &\leq \langle \nabla f(x_t) , - \gamma_t \E_\xi\nabla f(x_t,\xi) \rangle + \frac{L}{2} \E_\xi\Vert \gamma_t \nabla f(x_t,\xi) \Vert^2  \, .
\end{align*}
Bu assumptions \eqref{eq:unbiased} and \eqref{eq:bounded-variance}, the stochastic gradient $\nabla f(x,\xi)$ is unbiased and has a variance bounded by $\sigma^2$, which implies $\E_\xi\Vert  \nabla f(x_t,\xi) \Vert^2 \leq \Vert   \nabla f(x_t) \Vert^2 + \sigma^2$. Thus 
\begin{align*}
\E_\xi f(x_t-\gamma_t \nabla f(x_t, \xi ))-f(x_t) 
&\leq -\gamma_t \left(1 - \frac{L\gamma_t}{2}\right) \Vert \nabla f(x_t) \Vert^2 + \sigma^2 \frac{L\gamma_t^2}{2 } \, .
\end{align*}
Similarly, 
\begin{align*}
\E_\xi\Vert (z_t-x_*) - \gamma_t' \nabla f(x_t,\xi) \Vert^2 - \Vert z_t-x_*\Vert^2 &= -2 \gamma_t' \langle \E_\xi \nabla f(x_t,\xi), z_t-x_* \rangle + \gamma_t'^2 \E_\xi \Vert  \nabla f(x_t,\xi) \Vert^2  \\
&\leq -2 \gamma_t' \langle  \nabla f(x_t), z_t-x_* \rangle + \gamma_t'^2  \Vert  \nabla f(x_t) \Vert^2 + \sigma^2 \gamma_t'^2 \, .
\end{align*}
This gives
\begin{align*}
\varphi(\bar{x}_{t}+G(\bar{x}_{t}) ) - \varphi(\bar{x}_{t}) &\leq
\left( B_t \gamma_t'^2 -A_t \gamma_t \left(1 - \frac{L\gamma_t}{2}\right)\right)  \Vert \nabla f(x_t) \Vert^2 -2 B_t \gamma_t' \langle \nabla f(x_t), z_t- x_* \rangle \\
&\qquad + \sigma^2 \left(A_t \frac{L\gamma_t^2}{2} + B_t \gamma_t'^2\right) \, .
\end{align*}
Combining the bounds, we obtain 
\begin{align*}
I_t  &\leq \left(\frac{\diff A_t}{\diff t} - A_t \eta_t\right)  \langle \nabla f(x_t), x_t- x_* \rangle + \left(\frac{\diff B_t }{\diff t}- B_t\eta_t'\right) \Vert z_t - x_* \Vert^2 \\
&\qquad+ (A_t \eta_t-2 B_t \gamma_t')  \langle \nabla f(x_t), z_t- x_* \rangle + \left(B_t \eta_t'- \frac{\diff A_t}{\diff t} \frac{\mu}{2}\right) \Vert x_t - x_* \Vert^2\\ 
&\qquad   +\left( B_t \gamma_t'^2 -A_t \gamma_t \left(1 - \frac{L\gamma_t}{2}\right)\right)  \Vert \nabla f(x_t) \Vert^2 + \sigma^2 \left(A_t \frac{L \gamma_t^2 }{2}+ B_t \gamma_t'^2 \right)\, ,
\end{align*}
which is an additive perturbation of the bound \eqref{eq:aux-7}-\eqref{eq:aux-8} in the noiseless case, with a perturbation proportional to $\sigma^2$. The choices of parameters of Theorem \ref{thm:continuized} cancel all first five prefactors, and satisfy $\gamma_t = \frac{1}{L}$, $A_t \frac{L \gamma_t^2 }{2} = B_t \gamma_t'^2$. We thus obtain 
\begin{align*}
I_t \leq \sigma^2 \frac{A_t}{L} \, . 
\end{align*}
This bound controls the increase of $\phi_t$. Using the decomposition \eqref{eq:decomposition-phi-additive}, we obtain
\begin{align*}
\E f(x_t) - f(x_*) &\leq \frac{\E \phi_t}{A_t} \leq \frac{\phi_0}{A_t} + \frac{\int_{0}^{t}\E I_s \diff s}{A_t}  \\
&\leq \frac{A_0 (f(x_0)-f(x_*)) + B_0 \Vert z_0 - x_* \Vert^2}{A_t} + \frac{\sigma^2}{L}\frac{\int_{0}^{t} A_s \diff s}{A_t} \, .
\end{align*} 

\subsubsection{Proof of the first part (convex case)}

In this case, $A_t = \frac{t^2}{2L}$ and $B_0 = 1$. Thus $\int_{0}^{t} A_s \diff s = \frac{1}{2L} \frac{t^3}{3}$. Thus
\begin{align*}
\E f(x_t) - f(x_*) \leq \frac{2L\Vert z_0 -x_* \Vert^2}{t^2} + \sigma^2 \frac{t}{3L} \, .
\end{align*} 

\subsubsection{Proof of the second part (strongly convex case)}

In this case, $A_t = A_0 \exp\left(\sqrt{\frac{\mu}{L}}t\right)$ and $B_0 = A_0 \frac{\mu}{2}$. Thus $\int_{0}^{t} A_s \diff s \leq A_0 \sqrt{\frac{\mu}{L}}^{-1}\exp\left(\sqrt{\frac{\mu}{L}}t\right) = \sqrt{\frac{L}{\mu}} A_t$. Thus 
\begin{align*}
\E f(x_t) - f(x_*) \leq \left(f(x_0) - f(x_*) + \frac{\mu}{2} \Vert z_0 - x_* \Vert^2\right) \exp \left(-\sqrt{\frac{\mu}{L}}t\right) + \sigma^2 \frac{1}{\sqrt{\mu L}} \, .
\end{align*}	

\section{Proof of Theorem \ref{thm:discretization}}
\label{ap:proof-thm-discretization}

By integrating the ODE
\begin{align*}
&\diff x_t = \eta_t (z_t - x_t) \diff t \, ,  \\
&\diff z_t = \eta_t'(x_t - z_t) \diff t \, ,
\end{align*}
between $T_k$ and $T_{k+1}-$, we obtain that there exists $\tau_k, \tau_k''$, such that 
\begin{align}
&\tilde{y}_k = x_{T_{k+1}-} = x_{T_k} + \tau_k(z_{T_k} - x_{T_k}) = \tilde{x}_k + \tau_k (\tilde{z}_k - \tilde{x}_k) \, , \label{eq:aux-1}\\
&z_{T_{k+1}-} = z_{T_k} + \tau_k''(x_{T_k} - z_{T_k}) = \tilde{z}_k + \tau_k'' (\tilde{x}_k - \tilde{z}_k) \, . \nonumber
\end{align}
From the first equation, we have $\tilde{x}_k = \frac{1}{1 - \tau_k} \left(\tilde{y}_k - \tau_k \tilde{z}_k\right)$, which gives by substitution in the second equation,
\begin{align*}
z_{T_{k+1}-} &= \tilde{z}_k + \tau_k'' \left(\frac{1}{1 - \tau_k} \left(\tilde{y}_k - \tau_k \tilde{z}_k\right) - \tilde{z}_k\right) \\
&= \tilde{z}_k + \tau_k' (\tilde{y}_k - \tilde{z}_k) \, ,
\end{align*}
where $\tau_k' = \frac{\tau_k''}{1 - \tau_k}$.

Further, from \eqref{eq:gradient-1}-\eqref{eq:gradient-2}, we obtain the equations 
\begin{align}
&\tilde{x}_{k+1} = x_{T_{k+1}} = x_{T_{k+1}-} - \gamma_{T_{k+1}} \nabla f (x_{T_{k+1}-}) = \tilde{y}_k - \gamma_{T_{k+1}} \nabla f (\tilde{y}_k) \, , \label{eq:aux-2}\\
&\tilde{z}_{k+1} = z_{T_{k+1}} = z_{T_{k+1}-} - \gamma'_{T_{k+1}} \nabla f (x_{T_{k+1}-}) = \tilde{z}_k + \tau_k' (\tilde{y}_k - \tilde{z}_k) - \gamma'_{T_{k+1}} \nabla f (\tilde{y}_k) \, .  \label{eq:aux-3}
\end{align}
The stated equation \eqref{eq:discretization-1}-\eqref{eq:discretization-3} are the combination of \eqref{eq:aux-1}, \eqref{eq:aux-2} and \eqref{eq:aux-3}. 

\begin{enumerate}
	\item The parameters of Theorem \ref{thm:continuized}.\ref{it:cvx} are $\eta_t = \frac{2}{t}, \eta_t' = 0, \gamma_t = \frac{1}{L}$ and $\gamma_t' = \frac{t}{2L}$. In this case, the ODE 
	\begin{align*}
	&\diff x_t = \eta_t (z_t - x_t) \diff t = \frac{2}{t} (z_t - x_t) \diff t \, ,  \\
	&\diff z_t = \eta_t'(x_t - z_t) \diff t = 0\, ,
	\end{align*}
	can be integrated in closed form: for $t \geq t_0$,
	\begin{align*}
	&x_t = z_{t_0} + \left(\frac{t_0}{t}\right)^2 (x_{t_0}-z_{t_0})= x_{t_0} + \left(1 - \left(\frac{t_0}{t}\right)^2\right)(z_{t_0}-x_{t_0}) \, ,  \\
	&z_t = z_{t_0}\, .
	\end{align*} 
	In particular, taking $t_0 = T_k$, $t = T_{k+1}-$, we obtain $\tau_k = 1 - \left(\frac{T_k}{T_{k+1}}\right)^2$, $\tau_k'' = 0$ and thus $\tau_k' = \frac{\tau_k''}{1 - \tau_k} = 0$. Finally, $\tilde{\gamma}_k = \gamma_{T_k} = \frac{1}{L}$ and $\tilde{\gamma}'_k = \gamma_{T_k}' = \frac{T_k}{2L}$. 
	\item The parameters of Theorem \ref{thm:continuized}.\ref{it:str-cvx} are $\eta_t = \eta_t' \equiv \sqrt{\frac{\mu}{L}}, \gamma_t \equiv \frac{1}{L}$ and $\gamma_t' \equiv \frac{1}{\sqrt{\mu L }}$. In this case, the ODE 
	\begin{align*}
	&\diff x_t = \eta_t (z_t - x_t) \diff t = \sqrt{\frac{\mu}{L}} (z_t - x_t) \diff t \, ,  \\
	&\diff z_t = \eta_t'(x_t - z_t) \diff t = \sqrt{\frac{\mu}{L}} (x_t - z_t) \diff t \, ,
	\end{align*}
	can also be integrated in closed form: for $t \geq t_0$,
	\begin{align*}
	x_t &= \frac{x_{t_0}+z_{t_0}}{2} + \frac{x_{t_0}-z_{t_0}}{2} \exp\left(-2\sqrt{\frac{\mu}{L}}(t-t_0)\right) \\ 
	&= x_{t_0} + \frac{1}{2} \left(1 - \exp\left(-2\sqrt{\frac{\mu}{L}}(t-t_0)\right)\right)(z_{t_0} - x_{t_0}) \, , \\
	z_t &= \frac{x_{t_0}+z_{t_0}}{2} + \frac{z_{t_0}-x_{t_0}}{2} \exp\left(-2\sqrt{\frac{\mu}{L}}(t-t_0)\right) \\
	&= z_{t_0} + \frac{1}{2} \left(1 - \exp\left(-2\sqrt{\frac{\mu}{L}}(t-t_0)\right)\right)(x_{t_0} - z_{t_0}) \, . 
	\end{align*}
	In particular, taking $t_0 = T_k$, $t = T_{k+1}-$, we obtain $\tau_k = \tau_k'' = \frac{1}{2} \left(1 - \exp\left(-2\sqrt{\frac{\mu}{L}}(T_{k+1}-T_k)\right)\right)$ and thus $\tau_k' = \frac{\tau_k''}{1 - \tau_k} = \tanh\left(\sqrt{\frac{\mu}{L}}(T_{k+1}-T_k)\right)$. Finally, $\tilde{\gamma}_k = \gamma_{T_k} = \frac{1}{L}$ and $\tilde{\gamma}'_k = \gamma_{T_k}' = \frac{1}{\sqrt{\mu L}}$.
\end{enumerate}

\section{Heuristic ODE scaling limit of the continuized acceleration}
\label{ap:scaling-limit}

\subsection{Convex case}

With the choices of parameters of Theorem \ref{thm:continuized}.\ref{it:cvx}, the continuized acceleration is 
\begin{align*}
\diff x_{t} &= \frac{2}{t} (z_t-x_t) \diff t - \frac{1}{L} \nabla f(x_t)\diff N(t)  \, ,\\
\diff z_{t} &=  - \frac{t}{2L}\nabla f(x_t)\diff N(t) \, .
\end{align*}
The ODE scaling limit is obtained by taking the limit $L \to \infty$ (so that the stepsize $1/L$ vanishes) and rescaling the time $s = t/\sqrt{L}$. Some law of large number argument heuristically gives us that, as $L \to \infty$, $\diff N(t) = \diff N(\sqrt{L}s) \approx \sqrt{L} \diff s$. Thus in the limit, we obtain  
\begin{align*}
\diff x_{s} &= \frac{2}{\sqrt{L}s} (z_s-x_s) \sqrt{L}\diff s - \frac{1}{L} \nabla f(x_s)\sqrt{L} \diff s  \, ,\\
\diff z_{s} &=  - \frac{\sqrt{L}s}{2L}\nabla f(x_s)\sqrt{L}\diff s  \, .
\end{align*}
The second term of the first equation becomes negligible in the limit. Thus the equations simplify to 
\begin{align*}
\frac{\diff x_{s}}{\diff s} &= \frac{2}{s} (z_s-x_s)  \, ,\\
\frac{\diff z_{s}}{\diff s} &=  - \frac{s}{2}\nabla f(x_s) \, .
\end{align*}
Thus 
\begin{align*}
- \frac{s}{2}\nabla f(x_s) = \frac{\diff z_{s}}{\diff s} = \frac{\diff }{\diff s} \left(x_s + \frac{s}{2}\frac{\diff x_{s}}{\diff s} \right) = \frac{\diff x_s }{\diff s} + \frac{1}{2}\frac{\diff x_s }{\diff s}  + \frac{s}{2}\frac{\diff^2 x_s }{\diff s^2} \, ,
\end{align*}
and thus 
\begin{align*}
\frac{\diff^2 x_s }{\diff s^2} + \frac{3}{s}\frac{\diff x_s }{\diff s} + \nabla f(x_s) = 0 \, .
\end{align*}
This is the same limiting ODE as the one found by \cite{su2014differential} for Nesterov acceleration. 

\subsection{Strongly-convex case}

With the choices of parameters of Theorem \ref{thm:continuized}.\ref{it:str-cvx}, the continuized acceleration is 
\begin{align*}
\diff x_{t} &= \sqrt{\frac{\mu}{L}} (z_t-x_t) \diff t - \frac{1}{L} \nabla f(x_t)\diff N(t)  \, ,\\
\diff z_{t} &=  \sqrt{\frac{\mu}{L}}(x_t-z_t)\diff t - \frac{1}{\sqrt{\mu L }}\nabla f(x_t)\diff N(t) \, .
\end{align*}
Again, we take joint scaling $L \to \infty$, $s = t/\sqrt{L}$, with the approximation $\diff N(t) \approx \sqrt{L} \diff s$. We obtain 
\begin{align*}
\diff x_{s} &= \sqrt{\frac{\mu}{L}} (z_s-x_s) \sqrt{L} \diff s - \frac{1}{L} \nabla f(x_s)\sqrt{L} \diff s  \, ,\\
\diff z_{s} &=  \sqrt{\frac{\mu}{L}}(x_s-z_s)\sqrt{L} \diff s - \frac{1}{\sqrt{\mu L }}\nabla f(x_s)\sqrt{L} \diff s \, .
\end{align*}
As before, the second term of the first equation becomes negligible in the limit. Thus the equations simplify to 
\begin{align}
\frac{\diff x_{s}}{\diff s} &= \sqrt{\mu} (z_s-x_s)  \, , \label{eq:aux-10}\\
\frac{\diff z_{s}}{\diff s} &=  \sqrt{\mu}(x_s-z_s)- \frac{1}{\sqrt{\mu }}\nabla f(x_s) \label{eq:aux-11}\, .
\end{align}
From \eqref{eq:aux-10}, we have $z_s = x_s + \frac{1}{\sqrt{\mu}} \frac{\diff x_{s}}{\diff s}$, and by substitution in \eqref{eq:aux-11}, we obtain 
\begin{align*}
\frac{\diff^2 x_s }{\diff s^2} + 2 \sqrt{\mu}\frac{\diff x_s }{\diff s} + \nabla f(x_s) = 0 \, .
\end{align*}
This is the so-called ``low-resolution'' ODE for Nesterov acceleration of \cite{shi2018understanding}.

\end{document}